\newtheorem{definition}{Definition}
\newtheorem{theorem}{Theorem}
\newtheorem{proposition}{Proposition}[section]
\newtheorem{lemma}[proposition]{Lemma}
\theoremstyle{break} 
\newenvironment{proof}%
{{\par\noindent \bf Proof. \nobreak}}%
{\nobreak \removelastskip \nobreak \hfill $\Box$ \medbreak}
{{\par\noindent \bf Proof \nobreak}}%
{\nobreak \removelastskip \nobreak \hfill $\Box$ \medbreak}
{{\par\noindent \bf Proof lemma. \nobreak}}%
{\nobreak \removelastskip \nobreak \bf End proof lemma. \medbreak}
\newenvironment{remark}{\par \medskip \noindent {\bf Remark. }\nobreak}{\par \medskip}
\def\paragraph#1{{\bf #1\ }}
\newcommand{\RN}[1]{%
  \textup{\uppercase\expandafter{\romannumeral#1}}%
}
\newcommand{\expo}{\mathrm{e}}
\newcommand{\dd}{\mathrm{d}}
\newcommand{\DD}{\mathrm{D}}
\newcommand{\KL}{\mathrm{KL}}
\newcommand{\ind}{\mathbbm{1}}
\newcommand{\VV}{\mathrm{V}}
\newcommand{\overbar}[1]{\mkern 1.5mu\overline{\mkern-1.5mu#1\mkern-1.5mu}\mkern 1.5mu}
\DeclarePairedDelimiter\floor{\lfloor}{\rfloor}
\DeclareMathOperator*{\argmin}{\arg\!\min}
\title{The iterative persuasion-polarization opinion dynamics and its mean-field analysis}
\author{Fei Cao \footnotemark[1] \and Stephanie Reed \footnotemark[2]}
\begin{document}
\maketitle

\footnotetext[1]{University of Massachusetts Amherst - Department of Mathematics and Statistics, 710 N Pleasant St, Amherst, MA 01003, USA}
\footnotetext[2]{California State University, Fullerton - Department of Mathematics, 800 N State College Blvd, Fullerton, CA 92831, USA}

\tableofcontents

\begin{abstract}
In this paper, we introduce the Iterative Persuasion-Polarization (IPP) model to study the dynamics of opinion formation and change within a population. The IPP model integrates mechanisms of persuasion and repulsion, where individuals influence each other through interactions that can either align opinions incrementally or lead to greater divergence. The probability of each interaction type is governed by a parameter $\alpha$, representing the population's receptiveness to persuasion. We investigate how these interaction dynamics shape the long-term distribution of opinions, examining conditions that promote consensus or polarization. By deriving a system of nonlinear and autonomous ordinary differential equations (ODEs), we provide a rigorous mathematical framework for analyzing the distributional behavior of opinions in large populations. Our findings contribute to a deeper understanding of social influence dynamics and their implications in complex social systems.
\end{abstract}

\noindent {\bf Key words: Sociophysics, Agent-based model, Opinion dynamics, Interacting agents, Mean-field} 


\section{Introduction}
\setcounter{equation}{0}

Decoding how people form opinions and how individuals within a population influence the opinions of others has been a subject of mathematical interest since at least the mid nineteen-sixties \cite{ableson_1964,ableson_1967}. In recent decades, the application of physics principles to various fields such as sociology and economics to describe natural phenomena has proliferated. The fields of sociophysics and econophysics \cite{cao_biased_2023,cao_binomial_2022,cao_entropy_2021,cao_interacting_2022,cao_quantitative_2024,cao_uniform_2024,lanchier_rigorous_2018, lanchier_rigorous_2018-1, lanchier_role_2018,lanchier_reed_2024} have witnessed significant developments by virtue of tools and techniques from statistical physics. In particular, sociophyics, introduced in \cite{galam_gefen_shapir_1982}, which applies statistical physics to social contexts to infer the inner workings of human social behavior. Sociophysics has seen tremendous growth in interest since the turn of the century, with many now-classical models being introduced and studied during this time \cite{axelrod_1997,sznajd_sznajd_2000,deffuant_neau_amblard_weisbuch_2000}. We now recall a series of prototypical sociophysics models which have been subject to intensive research activities over the last few decades: \\


\noindent\textbf{Axelrod Model:} Proposed by Robert Axelrod in \cite{axelrod_1997}, this model features agents with multiple attributes that form their opinion profiles. In each iteration, an agent $x$ and one of their neighbors $y$ are chosen. With probability equal to their cultural similarity they will interact. Through this interaction a feature for which $x$ and $y$ differ will be chosen at random and agent $x$ will update this feature to match that of agent $y$. The Axelrod model is further investigated in \cite{lanchier_shweinsberg_2012}.
\vspace{.2in}

\noindent\textbf{Sznajd Model:} Introduced in \cite{sznajd_sznajd_2000} and known as the ``United we stand, divided we fall'' (USDF) model, it was motivated by the Ising Model \cite{ising_1925} which was proposed in order to describe magnestism of atoms in matter. In the original model, the one dimensional lattice of length $N$ makes up the underlying social structure. Each individual has an binary opinion in the set $\{1,-1\}$. Letting $S_i$ be the opinion of agent $i=1,2,\ldots,N$, at each iteration a deterministic process based on the current configuration takes place;
\begin{itemize}
\item If $S_i\,S_{i+1} = 1$, then set $S_{i-1} = S_{i+2} = S_i$,
\item If $S_i\,S_{i+1} = -1$, then set $S_{i-1} = S_{i+1}$ and $S_{i+2} = S_i$.
\end{itemize}
Many variations of the USDF model have been studied in \cite{sznajd_tabiszewski_timpanaro_2011,ochrombel_2001, fraia_tosin_2020,loy_raviola_tosin_2022,slanina_lavicka_2003,stauffer_deoliveira_2002,sznajd_sznajd_weron_2021}.
\vspace{.2in}

\noindent\textbf{Bounded Confidence Models:} Bounded Confidence (BC) models consist of a class of models where neighbors influence each other only if their opinions do not differ too much. Classic BC models were proposed in \cite{deffuant_neau_amblard_weisbuch_2000,hegselmann_krause_2002,dittmer_2001,deffuant_amblard_weisbuch_faure_2002, krause_2000,bennaim_2005}. The Deffuant model \cite{deffuant_neau_amblard_weisbuch_2000} is one such pinoneering BC model that has been widely studied \cite{lanchier_li_2020, lanchier_2012,lanchier_mercer_2023,stauffer_meyerortmanns_2004, weisbuch_deffuant_amblard_nadal_2002} since its inception. The Deffuant model assumes that each agent $i$ has a continuous opinion $x_i$. The convergence parameter $\mu$ typically in $[0, 0.5]$ controls how large of a change is made when opinions are updated and $d$ determines the maximum allowable difference in opinion of two interacting agents. At each time step, two agents with opinions $x$ and $x_\star$ are chosen to interact. As long as $|x-x_\star|<d$, they update their opinions to $x'$ and $x_\star'$ respectively according to the following rule:
 \begin{equation*}
 \left\{\begin{aligned}
   x' & = x + \mu\,(x_\star-x), \\
   x_\star' & = x_\star + \mu\,(x-x_\star).
   \end{aligned}\right.
 \end{equation*}

Each of these models, despite their relatively simple interaction schemes, can lead to very rich and complex behaviors in the long run.
They also aim to capture real elements of how humans influence one another in reality. In this paper, we propose a novel model, the Iterative Persuasion-Polarization (IPP) Model, which highlights several key features of human interaction:

\begin{itemize}
\item We assume that not all interactions result in the ``persuasion'' of one of the interacting individuals although this is still a possibility. In real-world scenarios, it is also possible to have negative interactions resulting in ``repulsion'' (or more polarization than existed prior to the interaction).
\item We also aim to capture the idea that people typically do not just adopt the opinion of their neighbor through a single interaction, instead it is more natural that over the course of many interactions can one individual begin to shift their opinion in incremental stages.
\item Like the Deffuant model, we do not assume that ones opinion on a particular topic can be adequately captured with just a binary set like $\{-1,1\}$ representing two extremes. Instead we assume that each individual has an opinion existing on a broader spectrum to better capture the nuances of the sentiment of individuals.
\end{itemize}

Our model seeks to capture the nuanced mechanisms of persuasion and repulsion through gradual opinion changes during interactions. By examining these dynamics, we aim to provide insights into how a population's openness to persuasion impacts the long-term distribution of opinions, and under what conditions consensus or polarization occurs. The next section is dedicated to rigorously defining the model at hand and deriving a system of nonlinear and autonomous ordinary differential equations (ODEs) to describe its (distributional) behavior in the mean-field region when the number of agents becomes sufficiently large.


\section{Iterative persuasion-polarization (IPP) model}\label{sec:sec2}
\setcounter{equation}{0}


Consider a population of individuals of size $N$. At any given time, each individual is characterized by her opinion (on a given topic) which is an element in the discrete set of admissible opinions
\begin{equation}\label{eq:opinion_space}
K = \{-k,-k+1,\ldots,k-1,k\}.
\end{equation}

The dynamics of the IPP model are as follows:

\begin{itemize}
\item At a constant rate, pairs of individuals $(x,y)$ interact within the population. In each interaction, $x$ acts as the ``persuader'' and $y$ as the ``persuaded'', indicating that  $x$ attempts to influence $y$'s opinion.
\item With probability $\alpha$, the interaction is of the ``persuade'' type, and with probability $\beta = 1-\alpha$, the interaction is of the ``repel'' type.
\item If the interaction is of the ``persuade'' type, and $x$ and $y$ do not already share the same opinion, $y$ will adjust her opinion to be one unit closer to $x$'s opinion (i.e. the model moves towards \emph{consensus}).
\item If the interaction is of the ``repel'' type, and  $y$ does not already hold the most extreme opinion, she will adjust her opinion to be one unit further from $x$'s opinion (i.e. the model moves towards \emph{polarization}).
\item In the case where $x$ and $y$ already share the same opinion, no update is made.
\end{itemize}

We assume that $0\leq\alpha\leq 1$ is a measure of the receptiveness or openness of the population to persuasion. We can interpret $\alpha \ll 1$ to indicate that the population is very antagonistic where most interactions result in a polarizing result. On the other hand, $\alpha \approx 1$ indicates that the population is highly receptive and thus most interactions lead to persuasion.

%

Letting $v_\star$ and $v$ denote the pre-interaction opinions of $x$ and $y$ respectively, and letting $v_\star'$ and $v'$ denote the post-interaction opinions, at the microscopic level, we can describe an interaction in our model by the following rules:

 \begin{align}\label{eqn:interaction}
    v' & =
    \begin{cases}
        v+a & \text{if } v = -k < v_\star,\\
        v+a-b & \text{if } -k<v<v_\star,\\
        v-a+b & \text{if } k>v>v_\star,\\
        v-a & \text{if } v=k> v_\star;\\
    \end{cases}
   \\
    v_\star' & = v_\star.
    \end{align}

We now introduce a set of notations and terminologies used throughout the present paper. If $f$ is the law of a real-valued random variable $X$, and $\varphi$ is a generic (smooth) test function, we will denote the expected value of $\varphi(X)$ to be

\begin{align}\label{eqn:expectation}
    \langle\varphi(X)\rangle & =
    \begin{cases}
        \int_{\mathbb{R}}\varphi(x)\, f(x)\,\dd x & \text{if $X$ is a continuous random variable},\\
        \sum\limits_{x:f(x)>0} \varphi(x)\,f(x) & \text{if $X$ is a discrete random variable}.
    \end{cases}
    \end{align}


Having origins in kinetic theory, the original Boltzmann equation is a partial integro-differential equation meant to describe the particle density of (dilute) gases \cite{villani_2002_review}. As of the early 2000s, the Boltzmann equation has been a popular tool for studying interacting particle systems with applications in economics, sociology and biology \cite{toscani_2006, cordier_pareschi_toscani_2005, kashdan_pareschi_2012}. A clear analogy can be made between a gas composed of colliding molecules resulting in velocity changes on a microscopic level and a population of agents whose interactions can result in a change of opinion or exchange of wealth in the cases of sociology and economics respectively. For a more detailed historical account of the Boltzmann equation and its various applications to interacting particle systems, we refer the reader to \cite{pareschi_toscani_2013}. The weak form of the Boltzmann-type equation provided in \cite{fraia_tosin_2020} given by
\begin{equation}\label{eqn:boltzmann_cont}
\begin{aligned}
&\frac{\dd}{\dd t}\int_{\VV}\varphi(v)f(t,v)\,\,\dd v  \\
&~~ =\frac{1}{2}\left\langle\int_{\VV}\int_{\VV}B(v,v_\star)\left(\varphi(v')+\varphi(v_\star')-\varphi(v)-\varphi(v_\star)\right)f(t,v)\,f(t,v_\star)\,\dd v\,\dd v_\star\right\rangle,
\end{aligned}
\end{equation}
where $B = B(v,v_\star)$ is the rate of interaction, $f(v,t)$ is the distribution of opinions $v\in V$ (a set of admissible opinions) at time $t\ge0$, and $\varphi$ is an arbitrary test function, is often used as a starting point to describe various opinion dynamics models in the mean-field region. However, the set $K$ of opinions in our model is discrete and so as the basis for our model we will use
\begin{equation}\label{eqn:boltzmann_disc}
\begin{aligned}
&\frac{\dd}{\dd t}\sum\limits_{v \in K}\varphi(v)f(t,v) \\
&~~= \frac{1}{2}\left\langle\sum\limits_{v\in K}\sum\limits_{v_\star \in K}B(v,v_\star)\left(\varphi(v')+\varphi(v_\star')-\varphi(v)-\varphi(v_\star)\right)f(t,v)\,f(t,v_\star)\right\rangle.
\end{aligned}
\end{equation}
In broader models, it can be assumed that the interaction rate between pairs of individuals, denoted as $B = B(v,v_\star)$, may vary based on the opinions of the individuals involved. For convenience, we shall assume individuals always interact at twice the unit rate thus rendering $B \equiv 2$. We are ultimately interested the fraction of individuals with opinion $i$ at time $t$, and we shall denote this quantity to be $p_i(t)$. This allows us the write $f(t,v)$ in terms of $p_i(t)$ for $i\in K$ by

\begin{equation}\label{eqn:f}
f(t,v) = \sum_{i=-k}^k p_i(t)\,\ind\{v=i\}.
\end{equation}

%
%

    Given our interaction rules prescribed via \eqref{eqn:interaction}, it follows that for any test function $\varphi$ on $\VV$, we must have that

        \begin{equation}\label{eqn:varphi}\varphi(v') =
    \begin{cases}
        \varphi(v)\,b+\varphi(v+1)\,a & \text{if } v = -k < v_\star,\\
        \varphi(v-1)\,b+\varphi(v+1)\,a & \text{if } -k<v<v_\star,\\
        \varphi(v-1)\,a+\varphi(v+1)\,b & \text{if } k>v>v_\star,\\
        \varphi(v)\,b+\varphi(v-1)\,a & \text{if } v=k> v_\star.\\
    \end{cases}\end{equation}

    Combining \eqref{eqn:f} and \eqref{eqn:varphi} yields that

\begin{equation}\label{eqn:1}
\begin{aligned}
&\frac{\dd}{\dd t}\sum_{i=-k}^k\varphi(i)\,p_i \\
&=\left\langle\sum_{v_\star=-k}^k\sum_{v=-k}^k\left(\varphi(v')+\varphi(v_\star')-\varphi(v)-\varphi(v_\star)\right)p_v\,p_{v_\star}\right\rangle\\
&= \left\langle\sum_{v_\star=-k+1}^k a\left(\varphi(-k+1)-\varphi(-k)\right)p_{v_\star}\,p_{-k}\right\rangle + \left\langle\sum_{v_\star=-k}^{k-1} a\left(\varphi(k-1)-\varphi(k)\right)p_{v_\star}\,p_{k}\right\rangle \\
&\quad + \left\langle\sum_{v_\star=-k}^{k}\sum_{v=-k+1}^{v_\star-1} \left(b\,\varphi(v-1)+a\,\varphi(v+1)-\varphi(v)\right)p_v\,p_{v_\star}\right\rangle \\
&\quad + \left\langle\sum_{v_\star=-k}^{k}\sum_{v=v_\star+1}^{k-1} \left(a\,\varphi(v-1)+b\,\varphi(v+1)-\varphi(v)\right)p_v\,p_{v_\star}\right\rangle \\
&= \sum_{v_\star=-k+1}^k \alpha\,\left(\varphi(-k+1)-\varphi(-k)\right)p_{v_\star}\,p_{-k} + \sum_{v_\star=-k}^{k-1} \alpha\,\left(\varphi(k-1)-\varphi(k)\right)p_{v_\star}\,p_k \\
&\quad + \sum_{v_\star=-k}^{k}\sum_{v=-k+1}^{v_\star-1} \left(\beta\,\varphi(v-1)+\alpha\,\varphi(v+1)-\varphi(v)\right)p_v\,p_{v_\star} \\
&\quad + \sum_{v_\star=-k}^{k}\sum_{v=v_\star+1}^{k-1} \left(\alpha\,\varphi(v-1)+\beta\,\varphi(v+1)-\varphi(v)\right)p_v\,p_{v_\star}. \\
\end{aligned}
\end{equation}

Now for each $v\in K$ we take $\varphi_v(i) = \ind\{i=v\}$ and then insert $\varphi_v$ in place of $\varphi$ in \eqref{eqn:1} which
gives rise to the following system of nonlinear Boltzmann-type ODEs:

\begin{equation}\label{eqn:sys}
\left\{
\begin{aligned}
p'_{-k} & = \alpha\, p_{-k}\,p_{-k+1} + \beta\,(1-p_{-k}-p_{-k+1})\,p_{-k+1} - \alpha\,p_{-k}\,(1-p_{-k}),\\
p'_i & = p_{i-1}\left(\alpha\sum_{j=i}^{k} p_j+\beta\sum_{j=-k}^{i-2}p_j\right) +p_{i+1}\left(\alpha\sum_{j=-k}^{i} p_j +\beta\sum_{j=i+2}^{k}p_j\right) - p_i\,(1-p_i), ~~-k<i<k \\
p'_k & = \alpha\,p_{k}\,p_{k-1} + \beta\,(1-p_{k}-p_{k-1})\,p_{k-1} - \alpha\,p_k\,(1-p_k).\\
\end{aligned}\right.
\end{equation}


In the next section, we provide an analysis of the ODE model \eqref{eqn:sys} in the cases where $\alpha=1$, $0<\alpha<1$ and $\alpha=0$.

\section{Large time analysis}\label{sec:sec3}
\setcounter{equation}{0}

In this section, we take on the task of analyzing the long time behavior of solutions to the nonlinear ODE system \eqref{eqn:sys}. First, we perform a harmless relabeling, i.e., we will set
\begin{equation}\label{eq:relabling}
q_n = p_{-k+n} \quad \text{for all $0 \leq n \leq 2k$}
\end{equation}
and thus identifying ${\bf p} \coloneqq (p_{-k},p_{-k+1},\ldots,p_{k-1},p_k)$ with ${\bf q} \coloneqq (q_0,q_1,\ldots,q_{2k-1},q_{2k})$. In a nutshell, we simply shift the space of admissible opinions $K$ \eqref{eq:opinion_space} from $\{-k,-k+1,\ldots,k-1,k\}$ to $\{0,1,\ldots,2k-1,2k\}$ so that all allowable values of opinions belong to $\mathbb N$. After such simple relabeling of the solution vector and shifting of the opinion space, the ODE system \eqref{eqn:sys} now reads as

\begin{equation}\label{eqn:ODE_main}
\left\{
\begin{aligned}
q'_0 & = \alpha\, q_0\,q_1 + \beta\,(1-q_0-q_1)\,q_1 - \alpha\,q_0\,(1-q_0)\\
q'_n & = q_{n-1}\left(\alpha\sum_{j=n}^{2k} q_j+\beta\sum_{j=0}^{n-2}q_j\right)+ q_{n+1}\left(\alpha\sum_{j=0}^{n} q_j +\beta\sum_{j=n+2}^{2k} q_j\right) - q_n\,(1-q_n), ~~0<n<2k \\
q'_{2k} & = \alpha\,q_{2k}\,q_{2k-1} + \beta\,(1-q_{2k}-q_{2k-1})\,q_{2k-1} - \alpha\,q_{2k}\,(1-q_{2k})
\end{aligned}\right.
\end{equation}

We split our results on the large time convergence behavior of the solution to \eqref{eqn:ODE_main} into several subsections as the analysis depends heavily on the choice of the parameter $\alpha$ (or equivalently $\beta = 1-\alpha$) within the unit interval $[0,1]$, which measures the openness/persuasiveness of the entire society. In section \ref{subsec:3.1} we prove that for $\alpha = 1$, the solution of \eqref{eqn:ODE_main} converges to a two-point Bernoulli-type distribution which implies that only two nearby opinions around the average opinion survive in the long run. Section \ref{subsec:3.2} is devoted to the analysis of the system \eqref{eqn:ODE_main} when $\alpha = 0$, in which only two extreme opinions (represented by $0$ and $2k$) remain after large times. Lastly, we show in section \ref{subsec:3.3} that the solution to \eqref{eqn:ODE_main} converges to a uniform distribution on $\{0,1,\ldots,2k-1,2k\}$ when $\alpha = 1/2$. We emphasize that our main tool for the analysis of the long time behavior of the system \eqref{eqn:ODE_main} relies on the careful design of a $\alpha$-dependent Lyapunov functional, whose choice and design originate from our physical intuition regarding the ODE dynamics.

\subsection{Convergence to ``almost consensus'' for $\alpha =1$}\label{subsec:3.1}

When $\alpha = 1$ and hence $\beta = 0$, the nonlinear ODE system \eqref{eqn:ODE_main} boils down to
\begin{equation}\label{eqn:ODE_alpha=1}
\left\{
\begin{aligned}
q'_0 & = q_0\,q_1 - q_0\,(1-q_0)\\
q'_n & = q_{n-1}\,\sum_{j=n}^{2k} q_j + q_{n+1}\,\sum_{j=0}^{n} q_j  - q_n\,(1-q_n), ~~0<n<2k \\
q'_{2k} & = q_{2k}\,q_{2k-1} - q_{2k}\,(1-q_{2k})
\end{aligned}\right.
\end{equation}
With the convention that $q_{-1} \equiv 0$ and $q_{2k+1} \equiv 0$, we can recast the system \eqref{eqn:ODE_alpha=1} into a more compact form
\begin{equation}\label{eqn:ODE_alpha=1_compact}
q'_n = q_{n-1}\,\sum_{j=n}^{2k} q_j + q_{n+1}\,\sum_{j=0}^{n} q_j  - q_n\,(1-q_n)
\end{equation}
holding for all $0\leq n\leq 2k$. We now encapsulate several elementary observations regarding the solution of \eqref{eqn:ODE_alpha=1} into the following lemma.

\begin{lemma}\label{lem:1}
Assume that ${\bf q} = (q_0,q_1,\ldots,q_{2k-1},q_{2k})$ is a classical solution to the nonlinear system of ODEs \eqref{eqn:ODE_alpha=1} with ${\bf q}(t=0) \in \mathcal{P}(\{0,1,\ldots,2k\})$, and denote $\mu \coloneqq \sum_{n=0}^{2k} n\,q_n(0) \in [0,2k]$. Then we have
\begin{equation*}
{\bf q}(t) \in \mathcal{P}(\{0,1,\ldots,2k\}) \quad \textrm{and} \quad \sum_{n=0}^{2k} n\,q_n(t) = \mu
\end{equation*}
for all $t \geq 0$. Moreover, the unique equilibrium distribution ${\bf q}^* = (q^*_0,q^*_1,\ldots,q^*_{2k-1},q^*_{2k})$ associated to the ODE dynamics \eqref{eqn:ODE_alpha=1} is given by
\begin{equation}\label{eq:equil_alpha=1}
q^*_{\floor*{\mu}} = 1-\mu + \floor*{\mu},~~~q^*_{\floor*{\mu}+1} = \mu - \floor*{\mu},~~~\text{and}~~~ q^*_n = 0 ~~\text{for $n \notin \{\floor*{\mu},1+\floor*{\mu}\}$},
\end{equation}
in which $\floor*{\mu}$ represents the integer part of $\mu$.
\end{lemma}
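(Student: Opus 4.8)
The plan is to establish the three assertions of Lemma \ref{lem:1} in turn: invariance of the probability simplex $\mathcal{P}(\{0,\ldots,2k\})$, conservation of the mean $\mu$, and the characterization of the unique equilibrium.

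\textbf{Conservation laws.} I would derive the two conservation laws directly from the weak formulation \eqref{eqn:1}, specialized to $\alpha=1$, $\beta=0$ and relabeled via $q_n = p_{-k+n}$, since this is cheaper than manipulating \eqref{eqn:ODE_alpha=1} coordinate-by-coordinate. Taking the constant test function $\varphi\equiv 1$ makes the right-hand side vanish identically, so $S(t)\coloneqq\sum_n q_n(t)$ is conserved; equivalently, summing the compact form \eqref{eqn:ODE_alpha=1_compact} over $n$ (using the conventions $q_{-1}=q_{2k+1}=0$) and reindexing the two gain terms into $\sum_{i\ne j} q_i q_j = S^2-\sum_i q_i^2$ yields the closed scalar equation $\dot S = S(S-1)$, whose uniqueness of solutions forces $S\equiv 1$ once $S(0)=1$. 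Taking $\varphi(v)=v$ and recalling from \eqref{eqn:interaction} that only the persuaded agent $y$ changes her opinion --- by $+1$ when $v<v_\star$ and by $-1$ when $v>v_\star$ --- the increment $\varphi(v')-\varphi(v)$ is antisymmetric under the exchange $v\leftrightarrow v_\star$, so the double sum cancels and $\frac{\dd}{\dd t}\sum_n n\,q_n = 0$; hence the mean remains at its initial value $\mu\in[0,2k]$.

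\textbf{Nonnegativity.} To complete simplex invariance I would check the quasipositivity (subtangentiality) condition: whenever $q\ge 0$ componentwise with $q_n=0$, the right-hand side of \eqref{eqn:ODE_alpha=1_compact} reduces to $q_{n-1}\sum_{j\ge n} q_j + q_{n+1}\sum_{j\le n} q_j\ge 0$, because the loss term $-q_n(1-q_n)$ vanishes at $q_n=0$. By Nagumo's invariance theorem (applicable since the polynomial vector field is locally Lipschitz), the nonnegative orthant is forward-invariant; intersecting with the invariant hyperplane $\{S=1\}$ gives ${\bf q}(t)\in\mathcal{P}(\{0,\ldots,2k\})$, and boundedness on the simplex guarantees global existence for all $t\ge 0$.

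\textbf{Equilibrium.} Here I would proceed in two steps. Substituting the two-adjacent-point ansatz $q_m=s$, $q_{m+1}=1-s$ (all other entries zero) into \eqref{eqn:ODE_alpha=1_compact} and verifying each coordinate confirms it is a stationary point for every $0\le s\le 1$ and every admissible $m$. For uniqueness, let $a$ be the smallest index with $q^*_a>0$; evaluating the stationarity equation at $n=a$ kills the term $q_{a-1}\sum_{j\ge a}q_j$ (since $q_{a-1}=0$) and uses $\sum_{j\ge a}q_j=1$, $\sum_{j\le a}q_j=q_a$ to give $q_a\big(q_{a+1}-(1-q_a)\big)=0$; as $q_a>0$ this forces $q_a+q_{a+1}=1$, so the entire mass sits on $\{a,a+1\}$. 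Thus every equilibrium is a two-adjacent-point distribution, and imposing the conserved mass $S=1$ together with the conserved mean $\mu$ selects $a=\floor*{\mu}$ with the weights in \eqref{eq:equil_alpha=1} (the case $\mu\in\mathbb{N}$ degenerating to a single mass at $\mu$). The main obstacle, and the one genuinely model-specific step, is this uniqueness argument: the leftmost-edge trick is what rules out any spread-out equilibrium, whereas the conservation laws and the quasipositivity check are routine.
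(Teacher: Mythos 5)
Your proposal is correct, but there is nothing in the paper to compare it against line-by-line: the paper omits the proof of Lemma \ref{lem:1} entirely, deferring it to the earlier work \cite{perez_reed_2023}. Your argument is thus a self-contained replacement, and its three components all hold up. The quasipositivity check (at $q_n = 0$ the loss term vanishes and only nonnegative gain terms remain), combined with Nagumo's theorem and the invariant unit-sum hyperplane, gives forward invariance of the simplex and, by compactness, global existence. The classification of equilibria by the smallest index $a$ of the support --- where stationarity at $n=a$ yields $q_a\,(q_a + q_{a+1} - 1) = 0$, hence all mass sits on $\{a,a+1\}$ --- is clean and correct, and the conserved mean then pins down $a = \floor*{\mu}$ and the weights in \eqref{eq:equil_alpha=1}, with the integer-$\mu$ case degenerating to a point mass as you note.

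One step should be restated more carefully. You present testing $\varphi \equiv 1$ in the weak form \eqref{eqn:1} and summing the compact ODE \eqref{eqn:ODE_alpha=1_compact} as ``equivalent'' derivations of mass conservation, but they are not equivalent off the simplex: the first gives $\frac{\dd}{\dd t}S = 0$ identically, while the second gives $\frac{\dd}{\dd t}S = S\,(S-1)$. The discrepancy is real, because the passage from \eqref{eqn:1} to \eqref{eqn:sys} rewrote the loss term $-p_i\sum_{j\neq i} p_j$ as $-p_i\,(1-p_i)$, so the weak formulation and the ODE system coincide only on the unit-sum hyperplane. Consequently, conservation laws for \eqref{eqn:ODE_alpha=1} must be derived from the ODE system itself --- which your computation $\frac{\dd}{\dd t}S = S\,(S-1)$ together with uniqueness does correctly --- and your subsequent use of the weak form with $\varphi(v) = v$ to conserve the mean is legitimate only after $S \equiv 1$ is in hand (your ordering does provide this, though the dependence is left implicit; alternatively, one can sum $n\,q_n'$ directly and check that each unordered pair $i<j$ contributes the coefficient $(i+1)+(j-1)-(i+j)=0$ once $1-q_n$ is replaced by $\sum_{j\neq n} q_j$). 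With that clarification made explicit, the proof is complete.
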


The proof of this elementary lemma can be found in a very recent work \cite{perez_reed_2023} and hence will be omitted here. It is also worth mentioning that the authors of \cite{perez_reed_2023} also established a qualitative pointwise convergence result of the form ${\bf q}(t) \xrightarrow{t\to \infty} {\bf q}^*$ in the very special case when $\alpha = 1$ and $k=1$, in which scenario the ODE system \eqref{eqn:ODE_alpha=1} becomes explicit solvable. Our main goal lies in the designation of a suitable Lyapunov functional in order to capture certain quantitative aspects of the solution trajectory, for purpose we recall the definition of the so-called Gini index:

\begin{definition}[\textbf{Gini index}]
For a given probability mass function ${\bf q} \in \mathcal{P}(\mathbb N)$ with mean $\mu \in \mathbb{R}_+$, the Gini index of the distribution ${\bf q}$ (whose value belongs to $[0,1]$) is defined by
\begin{equation}\label{def1:Gini}
G[{\bf q}] = \frac{1}{2\,\mu} \sum\limits_{i\in \mathbb N}\sum\limits_{j \in \mathbb N} |i-j|\,q_i\,q_j.
\end{equation}
\end{definition}

The Gini index $G$ is a widely used concept in socio-economical context which serves as a measurement of (wealth) distributional inequality among a given society and ranges from $0$ to $1$. We will prove that the Gini index is a Lyapunov functional along the solution of the system \eqref{eqn:ODE_alpha=1} for all $t\geq 0$, the main motivation behind the choice of the Gini index as appropriate Lyapunov functional for the evolution \eqref{eqn:ODE_alpha=1} resides in the following variational characterization of the Bernoulli-type equilibrium distribution ${\bf q}^*$ \eqref{eq:equil_alpha=1}:

\begin{lemma}\label{lem:variational_characterization}
The Gini index is minimized at ${\bf q}^*$ among probabilities on $\{0,1,\ldots,2k\}$ with fixed mean value $\mu \in [0,2k]$. In other words, let
\begin{equation}\label{eq:space_of_probabilities}
\mathcal{S}_\mu \coloneqq \left\{{\bf q} \in \mathcal{P}(\{0,1,\ldots,2k\}) \mid \sum_{n=0}^{2k} n\,q_n = \mu \right\}.
\end{equation}
Then \begin{equation}\label{eq:p*_characterization}
{\bf q}^* = \argmin_{{\bf q} \in \mathcal{S}_\mu} G[{\bf q}].
\end{equation}
\end{lemma}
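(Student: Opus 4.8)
The plan is to reduce the constrained minimization to a transparent convex‑maximization problem by re‑expressing the Gini index through the cumulative distribution function. Since the mean $\mu$ is fixed on $\mathcal{S}_\mu$, minimizing $G[{\bf q}]$ is equivalent to minimizing the mean absolute difference $\Delta[{\bf q}] \coloneqq \sum_{i}\sum_{j}|i-j|\,q_i\,q_j$. Writing $F_n \coloneqq \sum_{j=0}^{n} q_j$ for the CDF and using the elementary identity $|i-j| = \sum_{n}\ind\{\min(i,j)\leq n < \max(i,j)\}$, I would first establish the representation
\begin{equation*}
\Delta[{\bf q}] = 2\sum_{n=0}^{2k-1} F_n\,(1-F_n),
\end{equation*}
since for each fixed $n$ the inner double sum counts the ordered pairs with exactly one index $\leq n$, giving $2\,F_n(1-F_n)$. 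Together with the layer‑cake identity $\mu = \sum_{n=0}^{2k-1}(1-F_n)$ for a nonnegative integer‑valued variable, this recasts the whole problem in terms of the survival values $G_n \coloneqq 1-F_n$.

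In these variables the constraint ${\bf q}\in\mathcal{S}_\mu$ becomes $1\geq G_0\geq G_1\geq\cdots\geq G_{2k-1}\geq 0$ together with $\sum_{n=0}^{2k-1}G_n=\mu$, and because this sum is fixed we obtain $\Delta = 2\mu - 2\sum_{n}G_n^2$. Hence minimizing $\Delta$ is \emph{equivalent} to maximizing the convex objective $\sum_{n=0}^{2k-1}G_n^2$ over the compact convex polytope
\begin{equation*}
\mathcal{C}_\mu \coloneqq \Big\{G\in[0,1]^{2k} : G_0\geq\cdots\geq G_{2k-1},\ \textstyle\sum_{n}G_n=\mu\Big\}.
\end{equation*}
This is the heart of the matter: a convex function is maximized at an extreme point of a polytope, and I expect the extreme points here to be exactly the ``staircase'' vectors $G=(1,\ldots,1,c,0,\ldots,0)$ with at most one coordinate in the open interval $(0,1)$.

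To pin down the maximizer directly and get uniqueness, I would run a two‑coordinate perturbation. Suppose a feasible $G$ has two strictly interior coordinates; let $a$ be the smallest and $b$ the largest index with $G_a,G_b\in(0,1)$, so monotonicity forces $G_{a-1}=1$ (or $a=0$), $G_{b+1}=0$ (or $b=2k-1$), and $G_a\geq G_b$. Replacing $(G_a,G_b)$ by $(G_a+\eps,\,G_b-\eps)$ keeps $\sum_n G_n$ fixed and, for $\eps>0$ small, preserves every box and ordering constraint, while changing the objective by $2\eps(G_a-G_b)+2\eps^2>0$. Thus any maximizer has at most one interior coordinate, and combined with $\sum_n G_n=\mu$ and monotonicity this forces $G_n=1$ for $n<\floor*{\mu}$, $G_{\floor*{\mu}}=\mu-\floor*{\mu}$, and $G_n=0$ for $n>\floor*{\mu}$. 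Translating back via $q_n=G_{n-1}-G_n$ (with $G_{-1}\equiv1$) recovers exactly the Bernoulli‑type distribution ${\bf q}^*$ of \eqref{eq:equil_alpha=1}, which proves \eqref{eq:p*_characterization}. The only delicate point is verifying that the perturbation remains inside $\mathcal{C}_\mu$, and this is precisely where the choice of the extreme interior indices $a$ and $b$ (guaranteeing room above $G_a$ and below $G_b$) does the work.
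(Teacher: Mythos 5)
Your proof is correct, and it takes a genuinely different route from the paper's. The paper works directly with the rescaled index $\tilde{G}[{\bf q}] = \mu\,G[{\bf q}]$: it first proves the lower bound $\tilde{G}[{\bf q}] \geq \sum_{i\leq\floor*{\mu}}(\mu-i)\,q_i$ by keeping only the terms of the double sum with $i\leq\floor*{\mu}<j$, and then shows by a two-case analysis (depending on whether the mass $q_m$ at an index $m\notin\{\floor*{\mu},\floor*{\mu}+1\}$ is large or small relative to the threshold $\frac{\floor*{\mu}+1-\mu}{\floor*{\mu}+1-m}$) that this lower bound strictly exceeds $\tilde{G}[{\bf q}^*]=(1-\mu+\floor*{\mu})(\mu-\floor*{\mu})$ whenever such an $m$ carries positive mass. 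You instead pass to the survival function: the identities $\Delta[{\bf q}]=2\sum_{n} F_n(1-F_n)$ and $\mu=\sum_{n}(1-F_n)$ are standard and correctly justified, and they convert the problem into maximizing the strictly convex functional $\sum_n G_n^2$ over the monotone polytope $\mathcal{C}_\mu$, which your two-coordinate perturbation solves completely (your choice of the extreme interior indices $a,b$ does guarantee feasibility of the perturbation, including the adjacent case $b=a+1$, since there $G_a\geq G_b$ keeps the ordering intact). Your route is more structural: it explains why the minimizer must be a ``staircase'' distribution via extreme-point geometry, it delivers uniqueness of the minimizer automatically, and it would generalize verbatim to an arbitrary finite support; the paper's route is more elementary and self-contained, requires no change of variables, and its intermediate inequality provides an explicit quantitative lower bound on the gap $\tilde{G}[{\bf q}]-\tilde{G}[{\bf q}^*]$ in terms of the mass sitting below $\floor*{\mu}$. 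One cosmetic caveat, shared by both arguments: the equivalence between minimizing $G[{\bf q}]$ and minimizing $\Delta[{\bf q}]$ uses $\mu>0$, so the degenerate endpoints $\mu\in\{0,2k\}$ (where $\mathcal{S}_\mu$ reduces to a singleton) should be dispatched trivially.
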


\begin{proof}
For the sake of notational simplicity, we introduce
\begin{equation}\label{eq:Gini_rescaled}
\tilde{G}[{\bf q}] \coloneqq \frac{1}{2}\sum\limits_{i=0}^{2k}\sum\limits_{j=0}^{2k} |i-j|\,q_i\,q_j
\end{equation}
as the re-scaled version of the Gini index. In other words, $\tilde{G}[{\bf q}] = \mu\,G[{\bf q}]$ where $\mu$ is the mean of the distribution ${\bf q}$. A straightforward computation yields that
\begin{equation*}
G[{\bf q}^*] = \frac{1}{\mu}\,q^*_{\floor*{\mu}}\,q^*_{\floor*{\mu}+1} = \frac{1}{\mu}\,(1-\mu + \floor*{\mu})\,(\mu - \floor*{\mu}),
\end{equation*}
hence $\tilde{G}[{\bf q}^*] = (1-\mu + \floor*{\mu})\,(\mu - \floor*{\mu})$. Our goal is to show that if ${\bf q} \in \mathcal{S}_\mu$ satisfies $q_m > 0$ for some $m \notin \{\floor*{\mu},1+\floor*{\mu}\}$, then $\tilde{G}[{\bf q}] > \tilde{G}[{\bf q}^*]$. Without loss of generality, we work with the scenario that $m \in \{0,1,\ldots,\floor*{\mu}-1\}$. We first prove the following preliminary result valid for all ${\bf q} \in \mathcal{S}_\mu$:
\begin{equation}\label{eq:preliminary}
\tilde{G}[{\bf q}] = \frac{1}{2}\sum\limits_{i=0}^{2k}\sum\limits_{j=0}^{2k} |i-j|\,q_i\,q_j \geq \sum\limits_{i=0}^{\floor*{\mu}} (\mu-i)\,q_i.
\end{equation}
Indeed, we have \begin{align*}
\tilde{G}[{\bf q}] &\geq \sum\limits_{i\leq\floor*{\mu}}\sum\limits_{j\geq \floor*{\mu}+1} (j-i)\,q_i\,q_j = \sum\limits_{j\geq \floor*{\mu}+1}\sum\limits_{i\leq\floor*{\mu}} (j-i)\,q_i\,q_j \\
&= \sum\limits_{i\leq\floor*{\mu}} q_i\cdot \sum\limits_{j\geq \floor*{\mu}+1} j\,q_j - \sum\limits_{i\leq\floor*{\mu}} i\,q_i\cdot \sum\limits_{j\geq \floor*{\mu}+1} q_j \\
&= \sum\limits_{i\leq\floor*{\mu}} q_i\cdot\left(\mu - \sum\limits_{j\leq\floor*{\mu}} j\,q_j\right)-\sum\limits_{i\leq\floor*{\mu}} i\,q_i\cdot \left(1 - \sum\limits_{j\leq\floor*{\mu} q_j}\right) \\
&= \mu\,\sum\limits_{0\leq i\leq\floor*{\mu}} q_i - \sum\limits_{i\leq\floor*{\mu}} i\,q_i = \sum\limits_{0\leq i\leq\floor*{\mu}} (\mu-i)\,q_i.
\end{align*}
Now it suffices to prove that $\sum_{0\leq i\leq\floor*{\mu}} (\mu-i)\,q_i > (1-\mu + \floor*{\mu})\,(\mu - \floor*{\mu})$. We divide the proof into two sub-cases depending on how large $q_m$ is (recall that $q_m > 0$ by our assumption).\\
\vspace{0.2in}
Case i): If $(\floor*{\mu}+1)(1-q_m) \leq \mu - m\,q_m$ or equivalently if $q_m \geq \frac{\floor*{\mu}+1-\mu}{\floor*{\mu}+1-m}$. Then
\begin{equation*}
\sum\limits_{0\leq i\leq\floor*{\mu}} (\mu-i)\,q_i \geq (\mu-m)\,q_m = (\floor*{\mu}+1-\mu)\,\frac{\mu-m}{\floor*{\mu}+1-m} > (\floor*{\mu}+1-\mu)\,(\mu - \floor*{\mu}),
\end{equation*}
where the last inequality follows from the fact that the function $x \mapsto \frac{\mu -x}{\floor*{\mu}+1-x}$ is strictly decreasing for all $x\in [0,\mu]$. \\
\vspace{0.2in}
Case ii): If $(\floor*{\mu}+1)(1-q_m) > \mu - m\,q_m$. In this case, there exist $m_1,m_2,\ldots,m_\ell \in \{0,1,\ldots,\floor*{\mu}\} \setminus \{m\}$ with $q_{m_i} > 0$ for all $1\leq i\leq \ell \leq \floor*{\mu}$ such that
\begin{equation*}
(\floor*{\mu}+1)(1-q_m-q_{m_1}-\cdots-q_{m_\ell}) \leq \mu - m\,q_m - m_1\,q_{m_1} - \cdots - m_\ell\,q_{m_\ell}.
\end{equation*}
Therefore, on the on hand,
\begin{equation}\label{eq:piece1}
\begin{aligned}
\sum\limits_{0\leq i\leq\floor*{\mu}} (\mu-i)\,q_i &\geq (\mu-m)\,q_m + (\mu-m_1)\,q_{m_1} + \cdots + (\mu-m_\ell)\,q_{m_\ell} \\
&> (\mu - \floor*{\mu})\,(q_m + q_{m_1} + \cdots + q_{m_\ell}).
\end{aligned}
\end{equation}
On the other hand, we also have
\begin{equation}\label{eq:piece2}
\begin{aligned}
\sum\limits_{0\leq i\leq\floor*{\mu}} (\mu-i)\,q_i &\geq (\mu-m)\,q_m + (\mu-m_1)\,q_{m_1} + \cdots + (\mu-m_\ell)\,q_{m_\ell} \\
&= \mu\,(q_m + q_{m_1} + \cdots + q_{m_\ell}) - (m\,q_m + m_1\,q_{m_1} + \cdots + m_\ell\,q_{m_\ell}) \\
&\geq \mu\,(q_m + q_{m_1} + \cdots + q_{m_\ell}) - \left[\mu - (\floor*{\mu}+1)(1-q_m-q_{m_1}-\cdots-q_{m_\ell})\right] \\
&= \floor*{\mu}+1-\mu + (\mu - \floor*{\mu}-1)\,(q_m + q_{m_1} + \cdots + q_{m_\ell})\\
&= (\floor*{\mu}+1-\mu)\,(1-q_m-q_{m_1}-\cdots-q_{m_\ell}).
\end{aligned}
\end{equation}
Assembling \eqref{eq:piece1} and \eqref{eq:piece2} together we deduce that
\[\sum\limits_{0\leq i\leq\floor*{\mu}} (\mu-i)\,q_i > (\mu - \floor*{\mu})\,(q_m + q_{m_1} + \cdots + q_{m_\ell}) \geq (\mu - \floor*{\mu})\,(\floor*{\mu}+1-\mu)\] if $q_m + q_{m_1} + \cdots + q_{m_\ell} \geq \floor*{\mu}+1-\mu$, and that \[\sum\limits_{0\leq i\leq\floor*{\mu}} (\mu-i)\,q_i \geq (\floor*{\mu}+1-\mu)\,(1-q_m-q_{m_1}-\cdots-q_{m_\ell}) > (\floor*{\mu}+1-\mu)\,(\mu - \floor*{\mu})\] if
$q_m + q_{m_1} + \cdots + q_{m_\ell} < \floor*{\mu}+1-\mu$.  Finally, we conclude that \[\tilde{G}[{\bf q}] \geq \sum_{i\leq \floor*{\mu}} (\mu-i)\,q_i > (\floor*{\mu}+1-\mu)\,(\mu - \floor*{\mu}) = \tilde{G}[{\bf q}^*]\] and the proof is completed.
\end{proof}

\begin{remark}
The content of Lemma \eqref{lem:variational_characterization} conveys a very clear intuition from a economic point of view: if we interpret $q_n$ as the fraction of individuals/agents with $n$ dollars in a closed economical system, where the average amount of dollars per agent equals to $\mu \in [0,2k]$, then heuristically it makes a perfect sense that the ``most egalitarian'' way of distributing a very large bulk of money among the agents (under the constraint that each agent must have integer-valued wealth ranging from $0$ to $2k$) is to set a proportion of $\floor*{\mu}+1-\mu$ agents to have $\floor*{\mu}$ dollars and a proportion of $\mu - \floor*{\mu}$ agents to have $\floor*{\mu}+1$ dollars. In fact, this economic intuition, partially inspired from many works in econophysics \cite{cao_derivation_2021,cao_explicit_2021,cao_sticky_2024,cao_uncovering_2022}, is the main reason that motivates us to perform the innocent shifting and relabeling \eqref{eq:relabling} at the beginning of this section.
\end{remark}

We are now ready to state the main convergence result in this section.
\begin{theorem}\label{thm:1}
For any $k \in \mathbb{N}_+$, if ${\bf q}(t)$ is a solution of the nonlinear system of ODEs \eqref{eqn:ODE_alpha=1} with ${\bf q}(0) \in \mathcal{S}_\mu$ and $\mu \in (0,2k)$, then for all $t\geq 0$ we have
\begin{equation}\label{eq:Gini_dissipation}
\frac{\dd}{\dd t} \tilde{G}[{\bf q}] = \mu\,\frac{\dd}{\dd t} G[{\bf q}] = -2\sum\limits_{0\leq i<j<\ell \leq 2k} q_i\,q_j\,q_\ell \leq 0.
\end{equation}
Consequently, the Gini index serves an Lyapunov functional along the solution trajectory of the system \eqref{eqn:ODE_alpha=1}, and ${\bf q}(t) \xrightarrow{t \to \infty} {\bf q}^*$.
\end{theorem}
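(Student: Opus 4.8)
The plan is to split the proof into two essentially independent parts: first establish the pointwise-in-time dissipation identity \eqref{eq:Gini_dissipation}, and then deduce $\mathbf{q}(t)\to\mathbf{q}^*$ by a LaSalle-type invariance argument, exploiting that the dissipation vanishes precisely on distributions supported on at most two opinions.

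For the dissipation identity I would start from the rescaled Gini index $\tilde G[{\bf q}]=\tfrac12\sum_{i,j}|i-j|\,q_i\,q_j$ and differentiate in time. Since $|i-j|$ is symmetric, the two resulting terms coincide and $\frac{\dd}{\dd t}\tilde G=\sum_n q_n'\,\psi(n)$ with $\psi(n):=\sum_j|n-j|\,q_j$. The key point is that $\psi$ is an admissible (time-frozen) test function, so I can feed it into the weak form \eqref{eqn:1} specialized to $\alpha=1$: every interaction fixes the persuader $v_\star$ and moves the persuaded agent $v$ one unit toward $v_\star$, which gives
\[
\frac{\dd}{\dd t}\tilde G=\sum_{v<v_\star}\bigl(\psi(v+1)-\psi(v)\bigr)q_v\,q_{v_\star}+\sum_{v>v_\star}\bigl(\psi(v-1)-\psi(v)\bigr)q_v\,q_{v_\star}.
\]
Expanding $\psi(v\pm1)-\psi(v)=\sum_j\bigl(|v\pm1-j|-|v-j|\bigr)q_j$ recasts the right-hand side as a single triple sum $\sum_{v,v_\star,j}q_v\,q_{v_\star}\,q_j\,d(v,v_\star,j)$, where $d$ takes values in $\{-1,0,+1\}$ fixed by the order relations among $v,v_\star,j$ (with $d\equiv0$ when $v=v_\star$).

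The heart of the computation is then a short case analysis organized by the coincidence pattern of the three indices. When $v,v_\star,j$ are pairwise distinct I would sum $d$ over the six assignments of a fixed triple $\{a<b<c\}$ and verify that the total is $-2$, so these terms contribute $-2\sum_{a<b<c}q_a\,q_b\,q_c$; the terms with $v=j\neq v_\star$ contribute $+\sum_{a\neq b}q_a^2\,q_b$ while those with $v_\star=j\neq v$ contribute $-\sum_{a\neq b}q_a^2\,q_b$, so they cancel; and every term with $v=v_\star$ vanishes. This yields exactly $\frac{\dd}{\dd t}\tilde G=-2\sum_{0\le i<j<\ell\le2k}q_i\,q_j\,q_\ell\le0$, and since $\tilde G=\mu\,G$ with $\mu$ constant in time by Lemma \ref{lem:1}, the $G$-statement follows. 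I expect this bookkeeping — keeping the boundary cases $v=0,2k$ consistent with the interior and correctly cancelling the coincident-index contributions — to be the main (if elementary) obstacle.

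For convergence I would invoke LaSalle's invariance principle. By Lemma \ref{lem:1} the trajectory stays in the compact, forward-invariant slice $\mathcal{S}_\mu$, and $\tilde G$ is continuous, bounded below, and nonincreasing, so the $\omega$-limit set is a nonempty invariant subset of $\{{\bf q}\in\mathcal{S}_\mu:\frac{\dd}{\dd t}\tilde G=0\}$. The dissipation vanishes iff $\sum_{i<j<\ell}q_i\,q_j\,q_\ell=0$, i.e. iff ${\bf q}$ is supported on at most two opinions. I would then identify the largest invariant set inside this zero set with $\{{\bf q}^*\}$: a distribution on two non-adjacent opinions $a<b$ with $b\ge a+2$ cannot lie in an invariant subset because $q_{a+1}'\ge q_a\,q_b>0$ instantly creates a third atom and leaves the zero set; a distribution on an adjacent pair $\{a,a+1\}$ (or a single atom) is a genuine equilibrium whose integer support cannot migrate, and the conserved mean $\sum_n n\,q_n=\mu$ forces it to be exactly ${\bf q}^*$ via \eqref{eq:equil_alpha=1}. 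Hence the $\omega$-limit set is the singleton $\{{\bf q}^*\}$, giving ${\bf q}(t)\to{\bf q}^*$; the variational characterization in Lemma \ref{lem:variational_characterization} supplies the consistent picture that ${\bf q}^*$ is precisely the $\tilde G$-minimizer on $\mathcal{S}_\mu$ toward which the Gini index decreases.
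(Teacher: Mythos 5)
Your proof is correct, but both halves take routes genuinely different from the paper's. For the dissipation identity, the paper differentiates $\tilde{G}$ directly, substitutes the ODE right-hand sides of \eqref{eqn:ODE_alpha=1}, and grinds through the algebra using the cumulative sums $F_n=\sum_{i\leq n}q_i$, eventually splitting the result into a $4\sum_{i<j<\ell}q_i q_j q_\ell$ piece and a $-6\sum_{i<j<\ell}q_i q_j q_\ell$ piece; you instead exploit the weak Boltzmann form \eqref{eqn:1} with the time-frozen test function $\psi(n)=\sum_j |n-j|\,q_j$ and reduce everything to a sign count $d(v,v_\star,j)\in\{-1,0,+1\}$ over index triples. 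Your bookkeeping checks out: the six assignments of a distinct triple $a<b<c$ sum to $-1-1+1+1-1-1=-2$, the coincidence terms $v=j\neq v_\star$ and $v_\star=j\neq v$ give $+\sum_{a\neq b}q_a^2 q_b$ and $-\sum_{a\neq b}q_a q_b^2$, which cancel after relabeling, and with $\alpha=1$ the boundary rules in \eqref{eqn:varphi} coincide with the interior ones, so the identity follows; this combinatorial route is arguably more transparent and less error-prone than the paper's CDF manipulation. For the convergence statement, the paper is very terse --- it simply asserts that ${\bf q}(t)\to{\bf q}^*$ follows from the variational characterization of Lemma \ref{lem:variational_characterization}, with a pointer to \cite{boghosian_h_2015} --- whereas you supply a self-contained LaSalle argument: the $\omega$-limit set is an invariant subset of the zero-dissipation set (supports of size at most two); two non-adjacent atoms $a<b$ instantly generate mass at $a+1$ (indeed $q_{a+1}'\geq q_a q_b>0$) and therefore cannot lie in an invariant subset; adjacent-pair or single-atom distributions are genuine equilibria, and conservation of the mean $\mu$ then singles out ${\bf q}^*$ via \eqref{eq:equil_alpha=1}. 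This is in fact a more complete justification than the paper offers, since monotonicity of $\tilde{G}$ together with minimality of $\tilde{G}[{\bf q}^*]$ alone does not exclude the Gini index plateauing strictly above its minimum; your invariance step is precisely what closes that gap.
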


\begin{proof}
We denote $F_{-1} = 0$ and $F_n = \sum_{i=0}^n q_i$ for $0\leq n \leq 2k$ as the cumulative distribution function associated to the probability mass function ${\bf q}$. Now we compute the time derivative of the re-scaled Gini index $\tilde{G}[{\bf q}]$ along the solution of \eqref{eqn:ODE_alpha=1} as follows:
\begin{align*}
\mu\,\frac{\dd}{\dd t} G[{\bf q}] &= \frac{\dd}{\dd t}\left[\frac 12\,\sum_{i,j=0}^{2k} |i-j|\,q_i\,q_j\right] = \frac{\dd}{\dd t}\sum_{i,j=0}^{2k} |i-j|\,q'_i\,q_j \\
&= \sum_{i,j=0}^{2k} |i-j|\,q_j\,\left[q_{i-1}\,\sum_{\ell=i}^{2k} q_\ell + q_{i+1}\,\sum_{\ell=0}^i q_\ell  - q_i\,(1-q_i)\right] \\
&= \sum_{i=0}^{2k}\sum_{j=0}^{2k} |i-j|q_{i-1}\,\sum_{\ell=i}^{2k} q_\ell\,q_j + \sum_{i=0}^{2k}\sum_{j=0}^{2k}|i-j|q_{i+1}\,\sum_{\ell=0}^i q_\ell\,q_j - \sum_{i=0}^{2k}\sum_{j=0}^{2k} |i-j|q_i(1-q_i)q_j \\
&= \sum_{i=0}^{2k}\sum_{j=0}^{2k} \left\{|i+1-j|\,\sum_{\ell=i+1}^{2k} q_\ell + |i-1-j|\,\sum_{\ell=0}^{i-1} q_\ell - |i-j|\,(1-q_i)\right\}q_i\,q_j \\
&= \sum_{i=0}^{2k}\sum_{j=0}^{2k} \Big(|i+1-j|\,(1-q_i) + |i-1-j|\,\sum_{\ell=0}^{i-1} (|i-1-j|-|i+1-j|)\,q_\ell \\
&\qquad \qquad - |i-j|\,(1-q_i)\Big)q_i\,q_j \\
&= \sum_{i=0}^{2k}\sum_{j=0}^{2k} q_j\,q_i\,(1-q_i)\,(|i+1-j|-|i-j|) + \sum_{i=0}^{2k}\sum_{j=0}^{2k} q_j\,q_i\,(|i-1-j|-|i+1-j|)\,F_{i-1}\\
&= \sum_{i=0}^{2k} q_i\,(1-q_i)\,\left(\sum_{j\leq i} q_j - \sum_{j> i} q_j\right) - \sum_{i=0}^{2k} F_{i-1}\,q_i\,\left(-2\sum_{j\leq i-1} q_j + 2\sum_{j\geq  i+1} q_j\right) \\
&= 2\,\sum_{i=0}^{2k} q_i\,(1-q_i)\,F_i - \sum_{i=0}^{2k} q_i\,(1-q_i) - \sum_{i=0}^{2k} F_{i-1}\,q_i\,\left(2 - 2\,F_{i-1} - 2\,F_i\right) \\
&= \sum_{i=0}^{2k} q_i\,\left[2\,(1-q_i)\,F_i - (1-q_i) + F_{i-1}\,(2-2\,q_i - 4\,F_{i-1})\right] \\
&= \sum_{i=0}^{2k} q_i\,\left[2\,(1-q_i)\,F_i - (1-q_i) + 2\,(1-q_i)\,F_{i-1} 4\,F^2_{i-1}\right] \\
&= \sum_{i=0}^{2k} q_i\,\left[4\,F_{i-1}\,(1-F_i) + 2\,(1-q_i)\,q_i - (1-q_i)\right].
\end{align*}
Now, we compute
\begin{equation*}
4\,\sum_{i=0}^{2k} q_i\,F_{i-1}\,(1-F_i) = 4\,\sum_{i=0}^{2k} q_i\,\sum_{j=0}^{i-1} q_j\,\sum_{\ell = i+1}^{2k} q_\ell = 4\sum\limits_{0\leq i<j<\ell \leq 2k} q_i\,q_j\,q_\ell
\end{equation*}
and notice that \begin{align*}
&\sum_{i=0}^{2k} \left(2\,(1-q_i)\,q^2_i - (1-q_i)\,q_i\right) = \sum_{i=0}^{2k} (3\,q^2_i - 2\,q^3_i) - \sum_{i=0}^{2k} q_i \\
&= \sum_{i=0}^{2k} (3 - 2\,q_i)\,q^2_i - \left(\sum_{i=0}^{2k} q_i\right)^2 \\
&= \sum_{i=0}^{2k} \left(1+ 2\,\sum_{j\neq i} q_j\right)\,q^2_i - \left(\sum_{i=0}^{2k} q^2_i + \sum_{i=0}^{2k}\sum_{j\neq i} q_j\,q_i\right) \\
&= \sum_{i=0}^{2k} q^2_i + 2\,\sum_{i=0}^{2k}\sum_{j\neq i} q_j\,q^2_i - \left(\sum_{i=0}^{2k} q^2_i + \sum_{i=0}^{2k}\sum_{j\neq i} q_j\,q_i\cdot \sum_{\ell=0}^k q_\ell\right) \\
&= 2\,\sum_{i=0}^{2k}\sum_{j\neq i} q_j\,q^2_i - \left(\sum_{i=0}^{2k}\sum_{j\neq i} q_j\,q^2_i + \sum_{i=0}^{2k} q_i\,\sum_{j\neq i} q_j\,\sum_{\ell \neq i} q_\ell\right) \\
&= 2\,\sum_{i=0}^{2k}\sum_{j\neq i} q_j\,q^2_i - \left(2\,\sum_{i=0}^{2k}\sum_{j\neq i} q_j\,q^2_i + \sum_{\substack{0\leq i,j,\ell\leq 2k\\i\neq j\neq \ell}} q_i\,q_j\,q_\ell \right) \\
&= -3!\sum\limits_{0\leq i<j<\ell \leq 2k} q_i\,q_j\,q_\ell = -6\sum\limits_{0\leq i<j<\ell \leq 2k} q_i\,q_j\,q_\ell.
\end{align*}
Therefore, we deduce that
\begin{align*}
\frac{\dd}{\dd t} \tilde{G}[{\bf q}] = \mu\,\frac{\dd}{\dd t} G[{\bf q}] &= 4\sum\limits_{0\leq i<j<\ell \leq 2k} q_i\,q_j\,q_\ell - 6\sum\limits_{0\leq i<j<\ell \leq 2k} q_i\,q_j\,q_\ell \leq 0\\
&= -2\sum\limits_{0\leq i<j<\ell \leq 2k} q_i\,q_j\,q_\ell
\end{align*}
and the desired (pointwise) convergence ${\bf q}(t) \xrightarrow{t \to \infty} {\bf q}^*$ follows from the variational characterization \eqref{eq:p*_characterization} (see \cite{boghosian_h_2015} for the use of a similar strategy employed here).
\end{proof}

To illustrate the dissipation of the Gini index numerically, we use $k=2$, ${\bf q}(t=0) = (0.25, 0.2, 0.35, 0.2, 0)$, and the standard Runge-Kutta fourth-order algorithm to solve the ODE system \eqref{eqn:ODE_alpha=1} with time step $\Delta t = 0.001$. We plot in figure \ref{fig:alpha=1,k=2}-left and figure \ref{fig:alpha=1,k=2}-right the evolution of $G[{\bf q}(t)] - G[{\bf q}^*]$ and the solution vector ${\bf q}(t)$ with respect to time.

\begin{figure}[!htb]
  \begin{subfigure}{0.45\textwidth}
    \centering
    \includegraphics[scale=0.55]{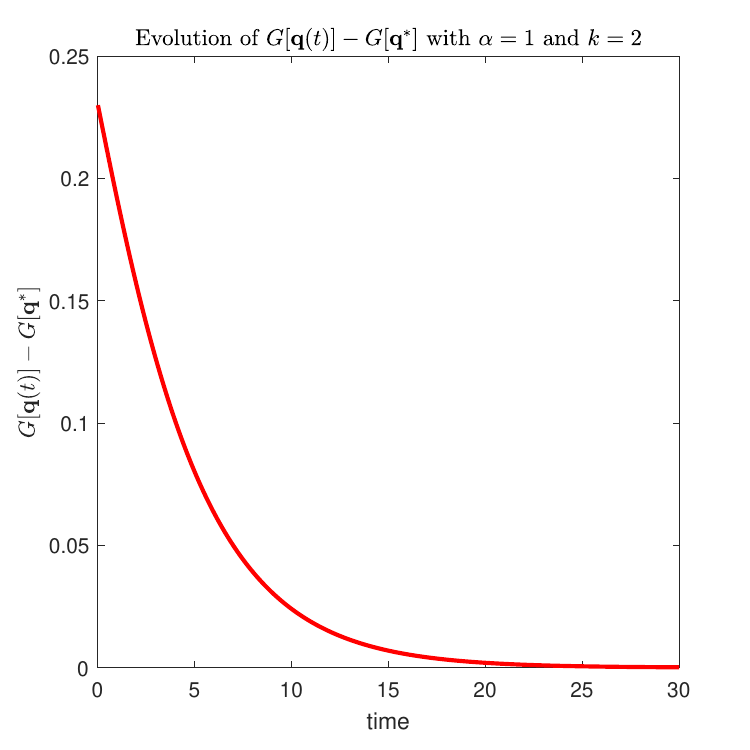}
  \end{subfigure}
  \hspace{0.1in}
  \begin{subfigure}{0.45\textwidth}
    \centering
    \includegraphics[scale=0.55]{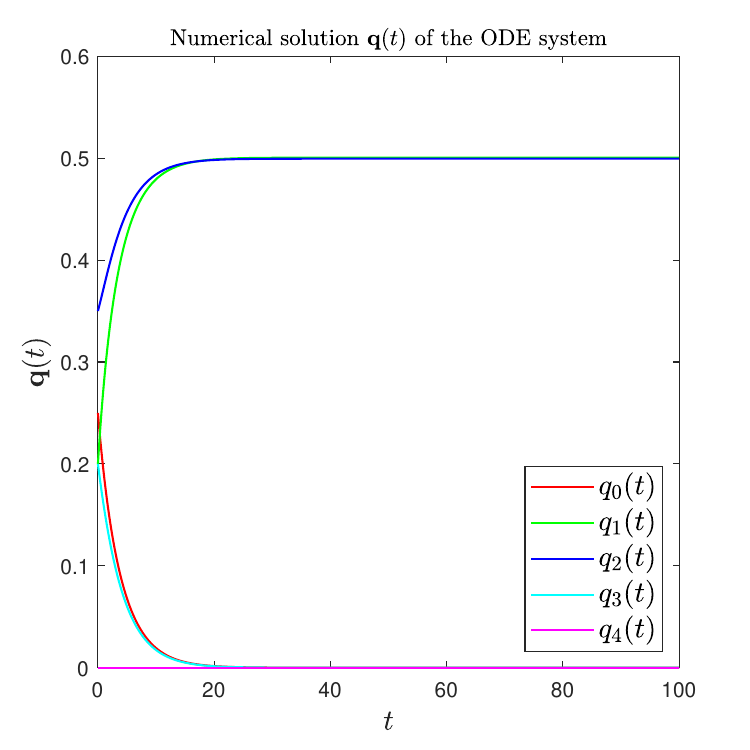}
  \end{subfigure}
  \caption{{\bf Left}: Decay of $G[{\bf q}(t)] - G[{\bf q}^*]$ along the solution of \eqref{eqn:ODE_alpha=1}. {\bf Right}: Evolution of the solution vector ${\bf q}(t)$ with respect to time.} 
  \label{fig:alpha=1,k=2}
\end{figure}

Although we managed to show that the monotonicity of the Gini index is the underlying mechanism which drives the solution of \eqref{eqn:ODE_alpha=1} to its unique equilibrium distribution ${\bf q}^*$, it is a very challenging task to search for a quantitative decay of the Gini index (along the solution of \eqref{eqn:ODE_alpha=1}) with respect to time, which amounts to establishing a explicit differential inequality satisfied by the time derivative of $G[{\bf q}]$. Fortunately, in the simplest case where $k=1$ we can indeed prove a quantitative bound on $G[{\bf q}]$, to which we now turn:

\begin{theorem}\label{thm:2}
If ${\bf q}(t)$ is a solution of the nonlinear system of ODEs \eqref{eqn:ODE_alpha=1} with $k=1$, ${\bf q}(0) \in \mathcal{S}_\mu$ and $\mu \in (0,2)$, then there exist some $\delta > 0$ and some explicitly computable $t^* > 0$ such that the following estimates are valid for all $t\geq t^*$:
\begin{equation}\label{eq:Gini_bound}
\tilde{G}[{\bf q}(t)] - \tilde{G}[{\bf q}^*] \leq \left\{\begin{aligned}
&\frac{1}{\frac{\delta}{2}\,(t-t^*)+\frac{1}{\tilde{G}[{\bf q}(t^*)]}} \qquad \textrm{if~} \mu = 1,\\
&\left(\tilde{G}[{\bf q}(t^*)] - \tilde{G}[{\bf q}^*]\right)\expo^{-\frac{|\mu-1|}{\min\{\mu,2-\mu\}}\,\delta\,(t-t^*)} \quad \text{if~} \mu \neq 1.
\end{aligned}
\right.
\end{equation}
\end{theorem}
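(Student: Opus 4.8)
The plan is to reduce the three-dimensional system to a single scalar ODE and then extract the stated decay rates from an explicit differential inequality for $H(t) \coloneqq \tilde{G}[{\bf q}(t)] - \tilde{G}[{\bf q}^*]$. By Lemma~\ref{lem:1} the solution stays in $\mathcal{S}_\mu$, so the two conserved quantities $q_0+q_1+q_2=1$ and $q_1+2q_2=\mu$ let me parametrize the whole trajectory by $s\coloneqq q_2$, writing $q_0=1-\mu+s$ and $q_1=\mu-2s$; the invariant region is $s\in[\max\{0,\mu-1\},\mu/2]$. Moreover the reflection $n\mapsto 2k-n$ (i.e. $q_0\leftrightarrow q_2$) is a symmetry of \eqref{eqn:ODE_alpha=1} that sends $\mu\mapsto 2-\mu$, so I may assume $\mu\in(0,1]$ and recover $\mu\in[1,2)$ by symmetry. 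A direct substitution gives $\tilde{G}(s)=\mu(1-\mu)+2\mu s-2s^2$, a downward parabola, and specializing the identity \eqref{eq:Gini_dissipation} of Theorem~\ref{thm:1} to $k=1$ (where $(0,1,2)$ is the only admissible triple) yields the clean evolution
\begin{equation*}
\dot{\tilde{G}} = -2\,q_0\,q_1\,q_2 .
\end{equation*}

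Next I would factor $H$. Since $\tilde{G}$ is quadratic with vertex at $s=\mu/2$, the two solutions of $\tilde{G}(s)=\tilde{G}(s^*)$ are $s=s^*$ and $s=\mu-s^*$, so $H=2(s-s^*)(\mu-s^*-s)$. For $\mu\in(0,1)$ one has $s^*=0$, giving $H=2q_2(q_1+q_2)$, whereas for $\mu=1$ one has $H=\tilde{G}=2s(1-s)=2q_2(q_0+q_1)$. Dividing the evolution of $H$ by the appropriate power of $H$ then produces, for $\mu\in(0,1)$,
\begin{equation*}
\frac{-\dot H}{H}=\frac{q_0\,q_1}{q_1+q_2},
\end{equation*}
and, for $\mu=1$,
\begin{equation*}
\frac{-\dot H}{H^2}=\frac{q_1}{2\,(q_0+q_1)^2}.
\end{equation*}
The crucial point is that both ratios degenerate to $0$ at the antipodal boundary $q_1=0$ (where $s=\mu/2$), so no estimate with a uniform positive rate can hold on the whole invariant region; this is exactly why the statement is restricted to $t\ge t^*$.

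To produce the constant $\delta$ I would invoke the qualitative convergence ${\bf q}(t)\to{\bf q}^*$ established in Theorem~\ref{thm:1}. Since $q_1^*=\mu$ for $\mu<1$ and $q_1^*=1$ for $\mu=1$ are strictly positive, $q_1(t)\to q_1^*>0$, so there exist an explicitly computable $t^*>0$ and a $\delta\in(0,q_1^*)$ with $q_1(t)\ge\delta$ for all $t\ge t^*$. Combining this with the always-valid simplex bounds $q_0=1-\mu+s\ge 1-\mu$ and $q_1+q_2=\mu-s\le\mu$ (for $\mu<1$), respectively $(q_0+q_1)^2\le 1$ (for $\mu=1$), I obtain for $t\ge t^*$ the autonomous scalar inequalities
\begin{equation*}
\dot H \le -\frac{1-\mu}{\mu}\,\delta\,H \quad(\mu<1),\qquad \dot H\le -\tfrac{\delta}{2}\,H^2\quad(\mu=1).
\end{equation*}
Integrating the first by Grönwall gives the exponential bound with rate $\tfrac{|\mu-1|}{\min\{\mu,2-\mu\}}\,\delta$ (the $\min$ reappearing once $\mu>1$ is restored by symmetry), while integrating $\tfrac{\dd}{\dd t}(1/H)\ge\delta/2$ gives precisely $H(t)\le\bigl(\tfrac{\delta}{2}(t-t^*)+1/H(t^*)\bigr)^{-1}$.

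The genuinely delicate step is the dichotomy between the two rates, and it is structural rather than computational: for $\mu\ne1$ exactly one of $q_0,q_2$ vanishes at equilibrium, so $q_0q_1q_2$ is \emph{linear} in the deviation while $H$ is also linear, yielding a nonzero linearized rate and hence exponential decay; for $\mu=1$ both $q_0$ and $q_2$ vanish, so $q_0q_1q_2$ is \emph{quadratic} in the deviation against a linear $H$, the linearized rate collapses, and only the slower algebraic rate survives. Tracking this degeneracy — and making $t^*$ and $\delta$ explicit from the already-known convergence — is where the real content lies; the remaining manipulations are the elementary factorizations and one-line integrations above.
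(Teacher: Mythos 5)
Your proposal is correct and takes essentially the same approach as the paper's proof: reduce to one scalar variable via the two conserved quantities, combine the dissipation identity $\frac{\dd}{\dd t}\tilde{G}[{\bf q}] = -2\,q_0\,q_1\,q_2$ with the eventual lower bound $q_1(t)\geq\delta$ obtained from the qualitative convergence of Theorem \ref{thm:1}, and close a linear (for $\mu\neq 1$) or quadratic (for $\mu=1$) differential inequality via Gronwall. The only divergence is cosmetic: you parametrize by $q_2$ and dispatch $\mu\in(1,2)$ through the reflection symmetry $q_n\mapsto q_{2-n}$, whereas the paper runs the analogous estimate directly as a separate third case.
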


\begin{proof}
In the special case $k=1$, the equilibrium distribution ${\bf q}^*$ boils down to
\[q^*_0 = \max\{1-\mu,0\},\quad  q^*_1 = \min\{2-\mu,\mu\}, \quad q^*_2 = \max\{\mu-1,0\}\]
and \begin{equation*}
\tilde{G}[{\bf q}^*] = (1-\mu + \floor*{\mu})\,(\mu - \floor*{\mu}) = \left\{\begin{aligned}
& \mu\,(1-\mu), \quad \textrm{if~} 0 < \mu \leq 1,\\
& 3\,\mu-2-\mu^2, \quad \textrm{if~} 1\leq \mu < 2.
\end{aligned}\right.
\end{equation*}
On the other hand, the re-scaled Gini index simplifies to \[\tilde{G}[{\bf q}] = \frac{1}{2}\sum\limits_{i=0}^{2}\sum\limits_{j=0}^{2} |i-j|\,q_i\,q_j = q_0\,q_1 + q_1\,q_2 + 2\,q_0\,q_2.\] Now since ${\bf q} \in \mathcal{S}_\mu$, $q_0 + q_1 + q_2 = 1$ and $q_1 + 2\,q_2 = \mu$, whence $q_2 = \frac{\mu - q_1}{2}$ and $q_0 = \frac{2-\mu-q_1}{2}$. Consequently, $q_1 \leq \min\{\mu,2-\mu\}$ and we can express the re-scaled Gini index $\tilde{G}[{\bf q}] = \mu\,G[{\bf q}]$ in terms of $q_1$ sorely:
\begin{equation*}
\begin{aligned}
\tilde{G}[{\bf q}] = q_0\,q_1 + q_1\,q_2 + 2\,q_0\,q_2 &= \frac{2-\mu-q_1}{2}\,q_1 + \frac{\mu - q_1}{2}\,q_1 + 2\,\frac{2-\mu-q_1}{2}\,\frac{\mu - q_1}{2} \\
&= \frac{2\,\mu - \mu^2 - q^2_1}{2}.
\end{aligned}
\end{equation*}
Thus we arrive at \[\tilde{G}[{\bf q}] - \tilde{G}[{\bf q}^*] = \frac{(\min\{\mu,2-\mu\})^2 - q^2_1}{2} \geq 0.\] In order to derive a differential inequality for $\tilde{G}[{\bf q}] - \tilde{G}[{\bf q}^*]$, the goal becomes bounding $q_0\,q_1\,q_2 = -\frac 12\,\frac{\dd}{\dd t} \tilde{G}[{\bf q}]$ from below by some function of $\tilde{G}[{\bf q}] - \tilde{G}[{\bf q}^*]$. We notice that in the case $k=1$, the ODE system \eqref{eqn:ODE_alpha=1_compact} implies that $q_1$ is increasing with respect to time, and since $q_1(t) \xrightarrow{t\to \infty} q^*_1 > 0$, for a small enough $\delta > 0$ (for instance, one may take $\delta = q^*_1 / 2$) we can always find some finite time $t^*$ (depending only on the initial datum and $\delta$) such that $q_1(t) \geq \delta$ for all $t\geq t^*$. In the sequel, all the differential inequalities we obtain below will be valid when time $t$ is larger than $\delta$. We divide the derivation of the relevant differential inequalities below into three sub-cases depending on the range of $\mu \in (0,2)$.\\
\vspace{0.2in}
Case i): If $0<\mu <1$. We have
\begin{align*}
-\frac{\dd}{\dd t} \tilde{G}[{\bf q}] &= -\frac{\dd}{\dd t} \left(\tilde{G}[{\bf q}] - \tilde{G}[{\bf q}^*]\right) = 2\,q_0\,q_1\,q_2 \\
&= 2\,q_1\,\frac{2-\mu-q_1}{2}\,\frac{\mu - q_1}{2} = \frac{q_1}{2}\,(2-\mu-q_1)\,(\mu - q_1) \geq \frac{q_1}{2}\,(2-2\,\mu)\,(\mu - q_1) \\
&= \frac{q_1}{2}\,\frac{2-2\,\mu}{2\,\mu}\,2\,\mu\,(\mu - q_1) \geq \frac{q_1}{2}\,\frac{2-2\,\mu}{2\,\mu}\,(\mu-q_1)\,(\mu+q_1) \\
&\geq \delta\,\frac{2-2\,\mu}{2\,\mu}\,\left(G[{\bf q}] - G[{\bf q}^*]\right),
\end{align*}
from which Gronwall's lemma leads us to \[\tilde{G}[{\bf q}(t)] - \tilde{G}[{\bf q}^*] \leq \left(\tilde{G}[{\bf q}(t^*)] - \tilde{G}[{\bf q}^*]\right)\expo^{-\frac{1-\mu}{\mu}\,\delta\,(t-t^*)} \] for all $t \geq t^*$. \\
\vspace{0.2in}
Case ii): If $1<\mu<2$. We use the fact that $q_1\leq 2-\mu$ to deduce
\begin{align*}
-\frac{\dd}{\dd t} \tilde{G}[{\bf q}] &= -\frac{\dd}{\dd t} \left(\tilde{G}[{\bf q}] - \tilde{G}[{\bf q}^*]\right) = \frac{q_1}{2}\,(2-\mu-q_1)\,(\mu - q_1) \\
&\geq \frac{q_1}{2}\,(2-\mu-q_1)\,(2\,\mu - 2) = \frac{q_1}{2}\,(2-\mu-q_1)\,\frac{2\,\mu-2}{2\,(2-\mu)}\,2\,(2-\mu) \\
&\geq \frac{q_1}{2}\,(2-\mu-q_1)\,\frac{2\,\mu-2}{2\,(2-\mu)}\,(2-\mu+q_1) \geq \frac{2\,\mu-2}{2\,(2-\mu)}\,\delta\,\left(\tilde{G}[{\bf q}] - \tilde{G}[{\bf q}^*]\right),
\end{align*}
from which Gronwall's lemma gives rise to \[\tilde{G}[{\bf q}(t)] - \tilde{G}[{\bf q}^*] \leq \left(\tilde{G}[{\bf q}(t^*)] - \tilde{G}[{\bf q}^*]\right)\expo^{-\frac{\mu-1}{2-\mu}\,\delta\,(t-t^*)} \] for all $t \geq t^*$. \\
\vspace{0.2in}
Case iii): If $\mu = 1$. Then on the one hand, \[-\frac{\dd}{\dd t} \left(\tilde{G}[{\bf q}] - \tilde{G}[{\bf q}^*]\right) = \frac{q_1}{2}\,(2-\mu-q_1)\,(\mu - q_1) = \frac{q_1}{2}\,(1-q_1)^2.\] On the other hand, we also have \[\left(\tilde{G}[{\bf q}] - \tilde{G}[{\bf q}^*]\right)^2 = \left(\frac{1-q^2_1}{2}\right)^2 = \frac{(1+q_1)^2}{4}\,(1-q_1)^2.\] As a result, $-\frac{\dd}{\dd t} \tilde{G}[{\bf q}] \geq \frac{\delta}{2}\,\left(\tilde{G}[{\bf q}] - \tilde{G}[{\bf q}^*]\right)^2$ for all $t\geq t^*$ and Gronwall's inequality yields that \[\tilde{G}[{\bf q}(t)] = \tilde{G}[{\bf q}(t)] - \tilde{G}[{\bf q}^*] \leq \frac{1}{\frac{\delta}{2}\,(t-t^*)+\frac{1}{\tilde{G}[{\bf q}(t^*)]}}\] for all $t\geq t^*$.
\end{proof}

To illustrate the quantitative convergence guarantees reported in Theorem \ref{thm:2} (with $k=1$). We use two sets of the initial datum: ${\bf q}^{(1)}(t=0) = (0.2, 0.3, 0.5)$ and ${\bf q}^{(2)}(t=0) = (0.5, 0, 0.5)$ respectively. We plot in figure \ref{fig:alpha=1,k=1,ex1}-left and figure \ref{fig:alpha=1,k=1,ex1}-right the evolution of $G[{\bf q}(t)] - G[{\bf q}^*]$ in the normal scale and the semi-logy scale, starting from ${\bf q}^{(1)}(t=0)$. Similarly, we show in figure \ref{fig:alpha=1,k=1,ex2}-left and figure \ref{fig:alpha=1,k=1,ex2}-right the evolution of $G[{\bf q}(t)] - G[{\bf q}^*]$ in the normal scale and the log-log scale, starting from ${\bf q}^{(2)}(t=0)$.

\begin{figure}[!htb]
  \begin{subfigure}{0.45\textwidth}
    \centering
    \includegraphics[scale=0.55]{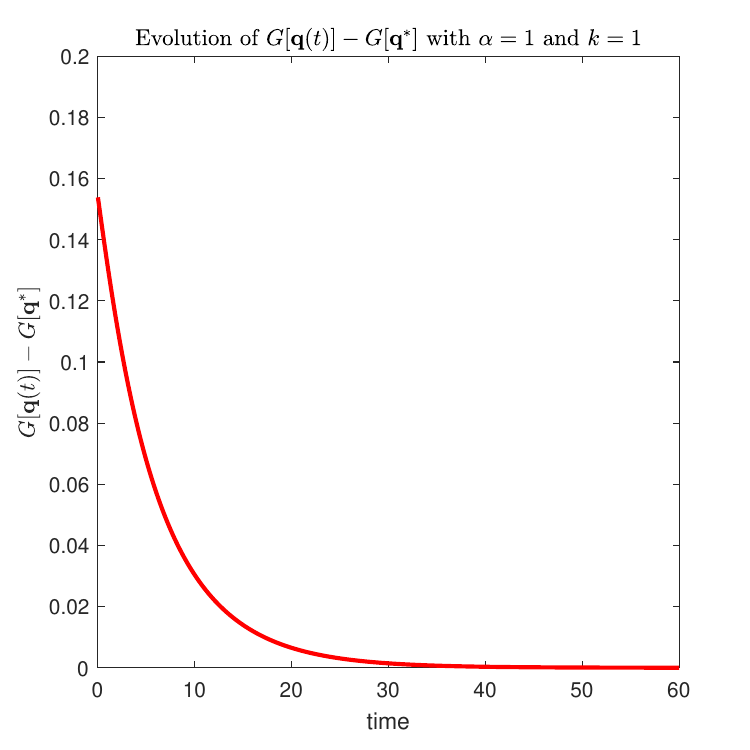}
  \end{subfigure}
  \hspace{0.1in}
  \begin{subfigure}{0.45\textwidth}
    \centering
    \includegraphics[scale=0.55]{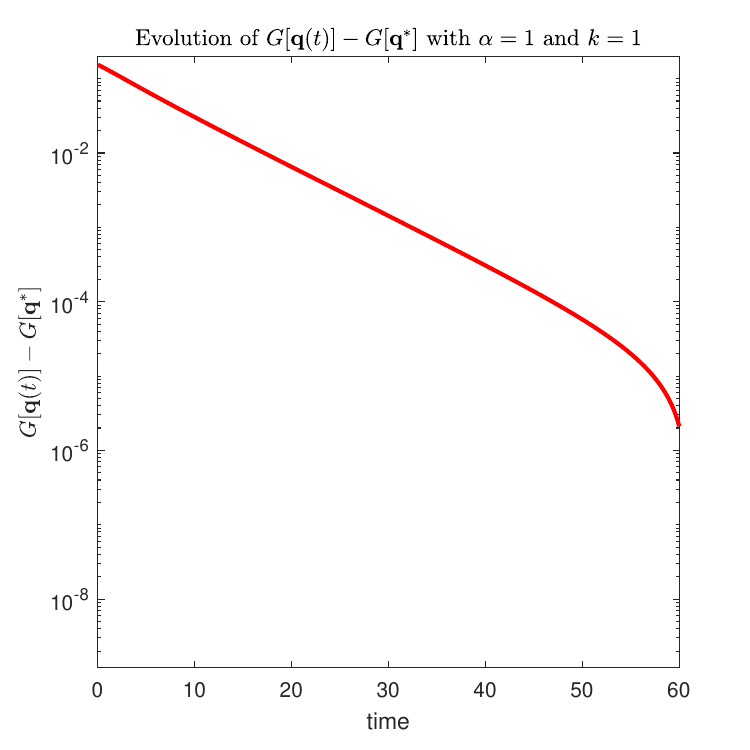}
  \end{subfigure}
  \caption{{\bf Left}: Decay of $G[{\bf q}(t)] - G[{\bf q}^*]$ along the solution of \eqref{eqn:ODE_alpha=1} with $k=1$ and ${\bf q}(t=0) = (0.2, 0.3, 0.5)$. {\bf Right}: Decay of $G[{\bf q}(t)] - G[{\bf q}^*]$ in the semi-logy scale. The decay is exponentially fast with respect to time, as predicted by Theorem \ref{thm:2}.}
  \label{fig:alpha=1,k=1,ex1}
\end{figure}

\begin{figure}[!htb]
  \begin{subfigure}{0.45\textwidth}
    \centering
    \includegraphics[scale=0.55]{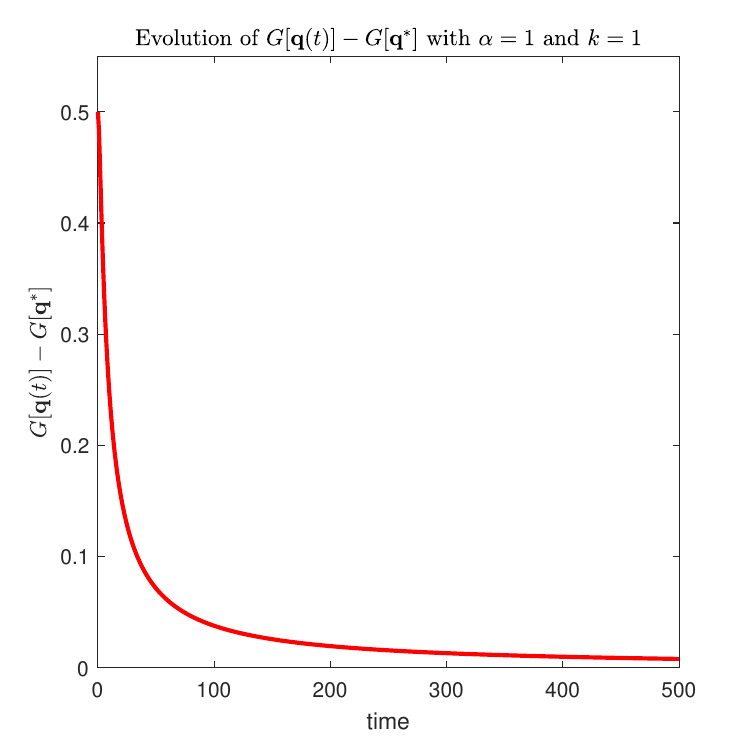}
  \end{subfigure}
  \hspace{0.1in}
  \begin{subfigure}{0.45\textwidth}
    \centering
    \includegraphics[scale=0.55]{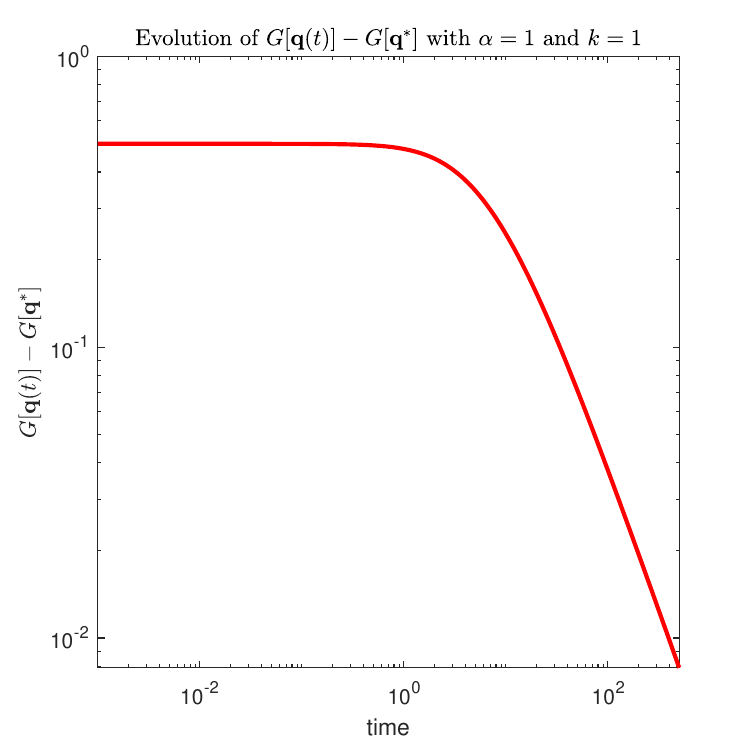}
  \end{subfigure}
  \caption{{\bf Left}: Decay of $G[{\bf q}(t)] - G[{\bf q}^*]$ along the solution of \eqref{eqn:ODE_alpha=1} with $k=1$ and ${\bf q}(t=0) = (0.5, 0, 0.5)$. {\bf Right}: Decay of $G[{\bf q}(t)] - G[{\bf q}^*]$ in the log-log scale. The decay is inversely proportional to time, as justified by Theorem \ref{thm:2}.}
  \label{fig:alpha=1,k=1,ex2}
\end{figure}

\subsection{Emergence of polarized society for $\alpha = 0$}\label{subsec:3.2}

In the case where $\alpha = 0$ or $\beta = 1$, the nonlinear ODE system \eqref{eqn:ODE_main} simplifies to
\begin{equation}\label{eqn:ODE_alpha=0}
\left\{
\begin{aligned}
q'_0 & = (1-q_0-q_1)\,q_1\\
q'_n & = q_{n-1}\,\sum_{j=0}^{n-2} q_j + q_{n+1}\,\sum_{j=n+2}^{2k} q_j  - q_n\,(1-q_n), ~~0<n<2k \\
q'_{2k} & = (1-q_{2k-1}-q_{2k})\,q_{2k-1}
\end{aligned}\right.
\end{equation}
We assume as usual that ${\bf q}(t=0) \in \mathcal{S}_\mu$ for some $\mu \in [0,2k]$, and observe that the evolution \eqref{eqn:ODE_alpha=0} preserves the total probability mass again since $\frac{\dd}{\dd t} \sum_{n=0}^{2k} q_n = 0$. However, the average opinion defined by $\mu(t) \coloneqq \sum_{n=0}^{2k} n\,q_n(t)$ will no longer be conserved at time evolves due to simple computation \[\frac{\dd}{\dd t} \mu(t) = \sum_{n=0}^{2k} n\,q'_n(t) = (q_0(t) - q_{2k}(t))\,(1-q_0(t) - q_{2k}(t)).\] Therefore, the re-scaled Gini index of the distribution ${\bf q}(t)$ now reads as
\begin{equation}\label{eq:Gini_rescaled_time_dependent}
\tilde{G}[{\bf q}(t)] = \mu(t)\,G[{\bf q}(t)] = \frac{1}{2}\,\sum\limits_{i=0}^{2k}\sum\limits_{j=0}^{2k} |i-j|\,q_i(t)\,q_j(t).
\end{equation}
Moreover, the dynamics admits a one-parameter family of equilibrium distributions supported only on two extreme opinions $0$ and $2k$:
\begin{equation}\label{eq:class}
\mathcal{A}_\gamma \coloneqq \left\{{\bf q} \in \mathcal{P}(\{0,1,\ldots,2k\}) \mid {\bf q} = \gamma\,\delta_0 + (1-\gamma)\,\delta_{2k},~~\gamma \in [0,1] \right\}.
\end{equation}
We now prove that the re-scaled Gini index still serves as a Lyapunov functional for the nonlinear system of ODEs \eqref{eqn:ODE_alpha=0}.
\begin{proposition}\label{prop:1}
For any $k \in \mathbb{N}_+$, if ${\bf q}(t)$ is a solution of the nonlinear system of ODEs \eqref{eqn:ODE_alpha=1} with ${\bf q}(0) \in \mathcal{S}_\mu$ and $\mu \in (0,2k)$, then for all $t\geq 0$ we have
\begin{equation}\label{eq:Gini_production}
\begin{aligned}
\frac{\dd}{\dd t} \tilde{G}[{\bf q}] &= (q_0-q_{2k})^2\,\sum\limits_{\ell=1}^{2k-1} q_\ell + \sum\limits_{\ell=1}^{2k-1} q^2_\ell\,(1-q_\ell) + \sum\limits_{\ell=1}^{2k-1} q^2_\ell\,(1-q_\ell-q_0-q_{2k}) \\
&\qquad  + 2\sum\limits_{1\leq i<j<\ell \leq 2k-1} q_i\,q_j\,q_\ell \\
&\geq 0.
\end{aligned}
\end{equation}
\end{proposition}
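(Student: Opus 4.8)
The plan is to differentiate the re-scaled Gini index $\tilde G[{\bf q}]$ along the solution and reorganize the resulting expression into the four manifestly nonnegative groups displayed in \eqref{eq:Gini_production}. I first record a caveat: although the statement refers to \eqref{eqn:ODE_alpha=1}, the production formula \eqref{eq:Gini_production}---in particular its nonnegative sign, which directly contradicts the dissipation $\frac{\dd}{\dd t}\tilde G[{\bf q}]=-2\sum_{i<j<\ell}q_iq_jq_\ell\le 0$ established in Theorem~\ref{thm:1} for \eqref{eqn:ODE_alpha=1}---together with the polarizing setting of this subsection shows that the governing dynamics must be the repulsion system \eqref{eqn:ODE_alpha=0}. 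Thus the reference \eqref{eqn:ODE_alpha=1} in the statement is a typo for \eqref{eqn:ODE_alpha=0}, and the computation below is carried out along \eqref{eqn:ODE_alpha=0}.

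By the symmetry $|i-j|=|j-i|$ we have $\frac{\dd}{\dd t}\tilde G[{\bf q}]=\sum_{i,j=0}^{2k}|i-j|\,q'_i\,q_j$. I would substitute the right-hand side of \eqref{eqn:ODE_alpha=0}, taking care that the two boundary equations at $0$ and $2k$ carry \emph{no} $-q_i(1-q_i)$ loss term---reflecting that the extreme opinions are absorbing under repulsion---whereas each interior equation does. Introducing the cumulative distribution $F_n=\sum_{m=0}^n q_m$ (with $F_{-1}=0$), I would reduce the interior contribution by summing over $j$ first and invoking the elementary telescoping identities for $|i\pm 1-j|-|i-j|$, exactly as in the proof of Theorem~\ref{thm:1}, except that here the inward partial sums $\sum_{j=0}^{i-2}q_j$ and $\sum_{j=i+2}^{2k}q_j$ of \eqref{eqn:ODE_alpha=0} replace the consensus sums of \eqref{eqn:ODE_alpha=1}; this sign reversal in the flow direction is precisely what flips the final sign relative to Theorem~\ref{thm:1}.

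The crux is the bookkeeping of the absolute-value differences. Splitting each double sum according to $j<i$, $j=i$, $j>i$ and tracking how a $\pm1$ shift in the first index flips the relevant sign, the interior terms should collapse into the triple product $2\sum_{1\le i<j<\ell\le 2k-1}q_iq_jq_\ell$ together with quadratic-times-linear remainders, while the asymmetry between the two absorbing boundaries at $0$ and $2k$ should produce the combination $(q_0-q_{2k})^2\sum_{\ell=1}^{2k-1}q_\ell$. Collecting the leftover quadratic contributions should then yield the two sums $\sum_\ell q_\ell^2(1-q_\ell)$ and $\sum_\ell q_\ell^2(1-q_\ell-q_0-q_{2k})$. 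I expect this reorganization---matching the boundary flux to the squared difference and separating the interior cubic from the quadratic remainders---to be the main obstacle, since it demands careful index manipulation rather than any conceptual difficulty.

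Once the identity is in hand, nonnegativity is immediate and requires no further work: the first term is a square times a nonnegative sum; $\sum_\ell q_\ell^2(1-q_\ell)\ge 0$ because $q_\ell\le 1$; the third term is nonnegative since $1-q_\ell-q_0-q_{2k}=\sum_{m\notin\{0,\ell,2k\}}q_m\ge 0$ for every $1\le\ell\le 2k-1$; and the final sum is a sum of products of nonnegative quantities. This establishes \eqref{eq:Gini_production} and hence $\frac{\dd}{\dd t}\tilde G[{\bf q}]\ge 0$, so the re-scaled Gini index is a monotone (Lyapunov) functional driving the solution toward the polarized equilibria \eqref{eq:class}.
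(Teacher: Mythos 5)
Your diagnosis of the typo is correct and consistent with the paper: although the statement cites \eqref{eqn:ODE_alpha=1}, the proposition sits in the $\alpha=0$ subsection, and the paper's own proof differentiates $\tilde{G}[{\bf q}]$ along \eqref{eqn:ODE_alpha=0}; your sign argument (the formula would contradict the dissipation in Theorem \ref{thm:1} otherwise) is exactly right. Your overall strategy --- differentiate $\tilde{G}$, substitute \eqref{eqn:ODE_alpha=0} (correctly noting that the boundary equations carry no $-q_i(1-q_i)$ loss term), work with the cumulative sums $F_n$ and the telescoping of $|i\pm 1-j|-|i-j|$ as in Theorem \ref{thm:1}, then regroup --- is also the paper's strategy, and your closing nonnegativity argument (in particular $1-q_\ell-q_0-q_{2k}=\sum_{m\notin\{0,\ell,2k\}}q_m\ge 0$ for $1\le \ell\le 2k-1$) is complete and correct.

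However, the heart of the proposition is the exact identity in \eqref{eq:Gini_production}, and your attempt never establishes it: every step from the substitution to the final four-term decomposition is phrased as ``should collapse,'' ``should produce,'' ``I expect.'' That is a genuine gap, because the statement is an equality, not merely an inequality, and the bookkeeping you defer is precisely where the content lies. For comparison, the paper carries this out in three concrete stages: first it reduces the derivative to $\frac{\dd}{\dd t}\tilde{G}[{\bf q}]=\sum_{i=0}^{2k}q_i\,\left[1+3\,q_i-4\,F_i\,(1-F_{i-1})\right]-q_0\,(1-q_0)-q_{2k}\,(1-q_{2k})$; second, it recycles the generic relation already derived in the proof of Theorem \ref{thm:1}, namely $\sum_{i}4\,q_i\,F_{i-1}\,(1-F_i)+\sum_i q_i\,(3\,q_i-2\,q^2_i-1)=-2\sum_{0\le i<j<\ell\le 2k}q_i\,q_j\,q_\ell$, together with the pointwise identity $4\,q_i\,F_i\,(1-F_{i-1})=4\,q_i\,F_{i-1}\,(1-F_i)+4\,q^2_i$, to obtain $\frac{\dd}{\dd t}\tilde{G}[{\bf q}]=\sum_i 2\,q^2_i\,(1-q_i)+2\sum_{i<j<\ell}q_i\,q_j\,q_\ell-q_0\,(1-q_0)-q_{2k}\,(1-q_{2k})$; third, it expands the boundary losses via $q_0\,(1-q_0)=q^2_0\,(1-q_0)+q_0\,(q_1+\cdots+q_{2k})^2$ (and symmetrically at $2k$) and collects terms, which is where the combination $(q_0-q_{2k})^2\sum_{\ell=1}^{2k-1}q_\ell$ actually emerges. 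Your sketch points in this direction, but until you execute at least these regroupings (or equivalent ones) the identity --- and hence the proposition --- remains unproven; the final positivity argument, which you do prove, only shows that \emph{if} the decomposition holds then the re-scaled Gini index is nondecreasing.
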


\begin{proof}
A similar lengthy computations as provided in the proof of Theorem \ref{thm:1} allow us to arrive at
\begin{equation*}
\frac{\dd}{\dd t} \tilde{G}[{\bf q}] = \sum_{i=0}^{2k} q_i\,\left[1+3\,q_i - 4\,F_i\,(1-F_{i-1})\right] - q_0\,(1-q_0) - q_{2k}\,(1-q_{2k}).
\end{equation*}
Now we recall that the proof of Theorem \ref{thm:1} also yields the following (generic) relation:
\begin{equation*}
\sum\limits_{i=0}^{2k} q_i\,4\,F_{i-1}\,(1-F_i) + \sum\limits_{i=0}^{2k} q_i\,(3\,q_i - 2\,q^2_i - 1) = -2\sum\limits_{0\leq i<j<\ell \leq 2k} q_i\,q_j\,q_\ell.
\end{equation*}
Since \begin{align*}
q_i\,4\,F_i\,(1-F_{i-1})&=4\,q_i\,(F_{i-1}+q_i)\,(1-F_i+q_i) \\
&= 4\,q_i\,F_{i-1}\,(1-F_i) + 4\,q_i\,\left(F_{i-1}\,q_i + q_i\,(1-F_i)+q^2_i\right) \\
&= 4\,q_i\,F_{i-1}\,(1-F_i) + 4\,q^2_i,
\end{align*}
we obtain \begin{align*}
\sum\limits_{i=0}^{2k} q_i\,4\,F_{i-1}\,(1-F_i) &= \sum\limits_{i=0}^{2k} 4\,q_i\,F_{i-1}\,(1-F_i) + \sum\limits_{i=0}^{2k} 4\,q^2_i \\
&= \sum\limits_{i=0}^{2k} 4\,q^2_i - 2\sum\limits_{0\leq i<j<\ell \leq 2k} q_i\,q_j\,q_\ell - \sum\limits_{i=0}^{2k} q_i\,(3\,q_i - 2\,q^2_i - 1).
\end{align*}
Therefore, \begin{equation*}
\frac{\dd}{\dd t} \tilde{G}[{\bf q}] = \sum\limits_{i=0}^{2k} 2\,q^2_i\,(1-q_i) + 2\sum\limits_{0\leq i<j<\ell \leq 2k} q_i\,q_j\,q_\ell - q_0\,(1-q_0) - q_{2k}\,(1-q_{2k}).
\end{equation*}
We also notice that \begin{align*}
q_0\,(1-q_0) &= q^2_0\,(1-q_0) + q_0\,(1-q_0)^2 = q^2_0\,(1-q_0) + q_0\,(q_1 + \cdots + q_{2k})^2 \\
&= q^2_0\,(1-q_0) + 2\,q_0\,\sum\limits_{1\leq i<j\leq 2k} q_i\,q_j + q_0\,(q^2_1+\cdots + q^2_{2k})
\end{align*}
and that \begin{align*}
q_{2k}\,(1-q_{2k}) &= q^2_{2k}\,(1-q_{2k}) + q_{2k}\,(1-q_{2k})^2 = q^2_{2k}\,(1-q_{2k}) + q_{2k}\,(q_0 + \cdots + q_{2k-1})^2 \\
&= q^2_{2k}\,(1-q_{2k}) + 2\,q_{2k}\,\sum\limits_{0\leq i<j\leq 2k-1} q_i\,q_j + q_{2k}\,(q^2_0+\cdots + q^2_{2k-1}).
\end{align*}
Finally, we conclude that
\begin{align*}
\frac{\dd}{\dd t} \tilde{G}[{\bf q}] &= 2\,\sum\limits_{\ell=1}^{2k-1} q^2_\ell\,(1-q_\ell) + 2\left[\sum\limits_{1\leq i<j<\ell \leq 2k-1} q_i\,q_j\,q_\ell -q_0\,q_{2k}\,\sum\limits_{1\leq \ell\leq 2k-1}q_\ell\right] \\
&\qquad + q^2_0\,(1-q_0) + q^2_{2k}\,(1-q_{2k}) - q_0\,(q^2_1+\cdots + q^2_{2k})-q_{2k}\,(q^2_0+\cdots + q^2_{2k-1}) \\
&= q^2_0\,q_{2k} + q^2_0\,\sum\limits_{\ell=1}^{2k-1} q_\ell + q^2_{2k}\,q_0 + q^2_{2k}\,\sum\limits_{\ell=1}^{2k-1} q_\ell - \sum\limits_{1\leq \ell\leq 2k-1} q^2_\ell\,(q_0+q_{2k}) - q^2_0\,q_{2k} - q^2_{2k}\,q_0 \\
&\quad +2\,\sum\limits_{\ell=1}^{2k-1} q^2_\ell\,(1-q_\ell) + 2\left[\sum\limits_{1\leq i<j<\ell \leq 2k-1} q_i\,q_j\,q_\ell -q_0\,q_{2k}\,\sum\limits_{1\leq \ell\leq 2k-1}q_\ell\right] \\
&= (q_0-q_{2k})^2\,\sum\limits_{\ell=1}^{2k-1} q_\ell + \sum\limits_{\ell=1}^{2k-1} q^2_\ell\,(1-q_\ell) + \sum\limits_{\ell=1}^{2k-1} q^2_\ell\,(1-q_\ell-q_0-q_{2k}) \\
&\qquad  + 2\sum\limits_{1\leq i<j<\ell \leq 2k-1} q_i\,q_j\,q_\ell
\end{align*}
and the proof is completed.
\end{proof}

\begin{remark}
In virtue of the content of Proposition \ref{prop:1}, one naturally expects that the solution of \eqref{eqn:ODE_alpha=0} will converge (as $t \to \infty$) to a unique equilibrium (denoted by $\overbar{{\bf q}}$) which belongs to $\mathcal{A}_\gamma$, hence the terminal opinion distribution will be polarized at two extreme opinions $0$ and $2k$. However, the dynamics \eqref{eqn:ODE_alpha=0} does not have any obvious invariants allowing us to link $\overbar{{\bf q}}$ with the initial datum ${\bf q}(0)$. In some sense, the long time behavior of the ODE system \eqref{eqn:ODE_alpha=0} resembles the large time behavior of a self-organized dynamics from mathematical biology \cite{motsch_new_2011}, since the equilibrium distribution $\overbar{{\bf q}}$ is encoded in the underlying system \eqref{eqn:ODE_alpha=0} and depends on the initial condition as well. Whether $\overbar{{\bf q}}$ can be expressed explicitly in terms of ${\bf q}(0)$ is a challenging open problem for future work.
\end{remark}

In order to demonstrate the production of the re-scaled Gini index numerically, we employ again $k=2$ and ${\bf q}(t=0) = (0.25, 0.2, 0.35, 0.2, 0)$, maintaining the same set-up as used in the generation of figure \ref{fig:alpha=1,k=2}. We plot in figure \ref{fig:alpha=0,k=2}-left and figure \ref{fig:alpha=0,k=2}-right the evolution of $G[{\bf q}(t)]$ and the solution vector ${\bf q}(t)$ with respect to time.

\begin{figure}[!htb]
  \begin{subfigure}{0.45\textwidth}
    \centering
    \includegraphics[scale=0.55]{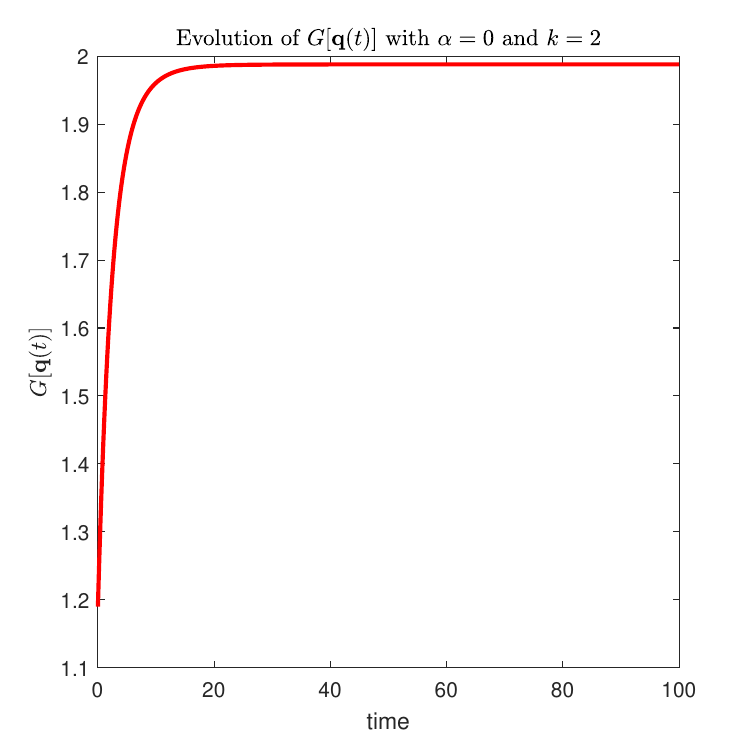}
  \end{subfigure}
  \hspace{0.1in}
  \begin{subfigure}{0.45\textwidth}
    \centering
    \includegraphics[scale=0.55]{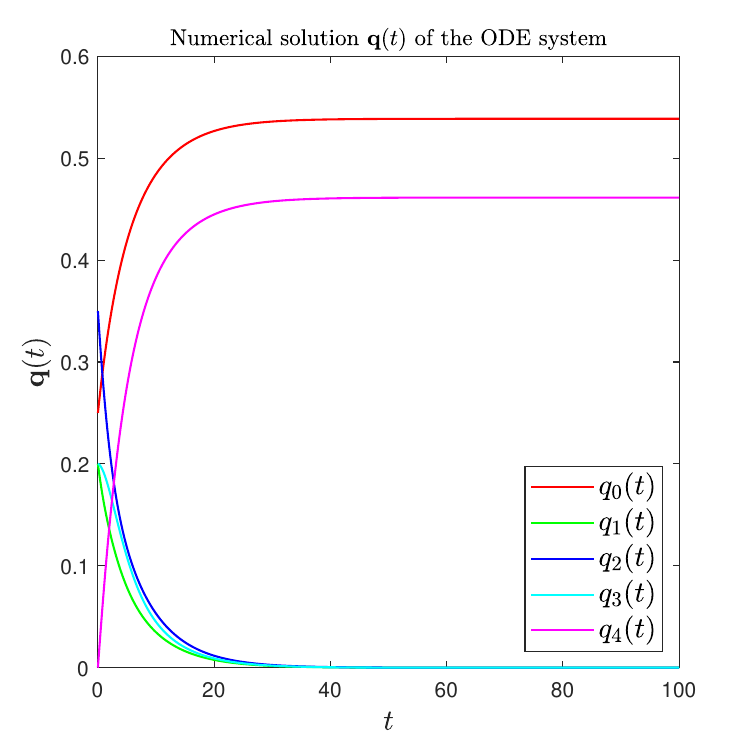}
  \end{subfigure}
  \caption{{\bf Left}: Production of $\tilde{G}[{\bf q}(t)]$ along the solution of \eqref{eqn:ODE_alpha=0}. {\bf Right}: Evolution of the solution vector ${\bf q}(t)$ with respect to time.} 
  \label{fig:alpha=0,k=2}
\end{figure}

\begin{remark}
In the special case where $k=1$, the system \eqref{eqn:ODE_alpha=0} is amenable to explicit solution, leading us to
\begin{equation*}
q_1(t) = \frac{C}{C+\expo^t},\quad q_2(t) = \frac{C+\expo^t}{\expo^t}\left[\frac{C}{2}\,\frac{(C+2)\,\expo^{2t}-2\,\expo^t - C}{(1+C)^2(\expo^t + C)^2}+\frac{q_2(0)}{1+C}\right],
\end{equation*}
and $q_0(t) = 1-q_1(t)-q_2(t)$, in which $C_1 = \frac{q_1(0)}{1-q_1(0)}$. Consequently, we deduce that ${\bf q}(t) \xrightarrow{t\to \infty} \overbar{{\bf q}}$ where $\overbar{{\bf q}}$ is given by
\begin{equation}\label{eq:q_bar}
\overbar{q}_0 = \frac 12\,(1+q^2_0(0)-q^2_2(0)), \quad \overbar{q}_1 = 0, \quad \overbar{q}_2 = \frac 12\,(1-q^2_0(0)+q^2_2(0)).
\end{equation}
It is worth mentioning that even in the simplest case where $k=1$, the equilibrium distribution $\overbar{{\bf q}}$ \eqref{eq:q_bar} already exhibits a nontrivial dependence on the initial datum. Finally, we remark that when $k=1$ the standard Gini index $G[{\bf q}]$ also enjoys a monotonicity property similar to its re-scaled version $\tilde{G}[{\bf q}]$, since we have
\begin{equation*}
\frac{\dd}{\dd t} G[{\bf q}] = \frac{q^3_1\,(1-q_1)+ 4\,q^2_1\,q^2_2 + 2\,q_1\,q^3_2}{(q_1+2\,q_2)^2} \geq 0.
\end{equation*}
However, in general the standard Gini index will no longer serve as a Lyapunov functional for the evolution system \eqref{eqn:ODE_alpha=0} when $k\geq 2$.
\end{remark}

\subsection{Relaxation to uniformly mixed opinions when $\alpha = 1/2$}\label{subsec:3.3}

When $\alpha = \beta = \frac 12$, the nonlinear ODE system \eqref{eqn:ODE_main} becomes
\begin{equation}\label{eqn:ODE_alpha=1/2}
\left\{
\begin{aligned}
q'_0 & = \frac 12\,(1-q_1)\,q_1 - \frac 12\,(1-q_0)\,q_0\\
q'_n & = \frac 12\,q_{n-1}\,(1-q_{n-1}) + \frac 12\,q_{n+1}\,(1-q_{n+1})  - q_n\,(1-q_n), ~~0<n<2k \\
q'_{2k} & = \frac 12\,(1-q_{2k-1})\,q_{2k-1} - \frac 12\,(1-q_{2k})\,q_{2k}
\end{aligned}\right.
\end{equation}
Starting from any ${\bf q}(0) \in \mathcal{S}_\mu$ with $\mu \in (0,2k)$, we easily see that the unique equilibrium solution of \eqref{eqn:ODE_alpha=1/2}, denoted by $\hat{{\bf q}}$, is given by the uniform distribution over $\{0,1,\cdots,2k\}$:
\begin{equation}\label{eq:uniform_equilibrium}
\hat{q}_n = \frac{1}{2k+1},\quad \textrm{for all}~~ 0\leq n \leq 2k.
\end{equation}
In this case, we demonstrate that the relative entropy serves as a Lyapunov functional along the solution of \eqref{eqn:ODE_alpha=1/2}. We recall that for a given ${\bf q} \in \mathcal{P}(\{0,1,\ldots,2k\})$, the relative entropy from ${\bf q}$ to $\hat{{\bf q}}$ is defined by
\begin{equation*}
\DD_{\KL}({\bf q}~||~\hat{{\bf q}}) \coloneqq \sum\limits_{n=0}^{2k} q_n\,\log \frac{q_n}{\hat{q}_n} = \sum\limits_{n=0}^{2k} \hat{q}_n\,\frac{q_n}{\hat{q}_n}\,\log \frac{q_n}{\hat{q}_n}.
\end{equation*}
We aim to show that the relative entropy $\DD_{\KL}({\bf q}~||~\hat{{\bf q}})$ will decay exponentially fast to zero (at least after some finite time) along the solution of system \eqref{eqn:ODE_main}. The key tool towards such exponential decay in relative entropy relies on a logarithmic Sobolev inequality (LSI) for the discrete uniform distribution.

\begin{lemma}\label{lem:LSI}
Assume that $2k \in \mathbb{N}_+$ is given and denote by $\hat{{\bf q}} = (\hat{q}_0,\ldots,\hat{q}_{2k})$ the uniform distribution on $\{0,1,\cdots,2k\}$. Then there exists some universal constant $C = C(k) \propto k^2$ depending only on $k$ such that
\begin{equation}\label{eq:I1}
\sum\limits_{n=0}^{2k} \hat{q}_n\,f^2_n\,\log f^2_n \leq C\,\sum\limits_{n=0}^{2k-1} \hat{q}_n\,(f_{n+1}-f_n)^2
\end{equation}
for all ${\bf f} = (f_0,\ldots,f_{2k}) \in \mathbb{R}^{2k+1}_+$ satisfying $\sum\limits_{n=0}^{2k} \hat{q}_n\,f^2_n = 1$.
\end{lemma}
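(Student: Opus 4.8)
The plan is to read \eqref{eq:I1} as a logarithmic Sobolev inequality (LSI) for the uniform law $\hat{{\bf q}}$ on the path $\{0,1,\ldots,2k\}$ equipped with the Dirichlet form $\mathcal E({\bf f},{\bf f}) := \sum_{n=0}^{2k-1}\hat q_n\,(f_{n+1}-f_n)^2$, and to obtain it by decoupling into two more tractable ingredients that are recombined through Rothaus's lemma. Writing $\mathrm{Ent}_{\hat q}(f^2) := \sum_n \hat q_n\,f^2_n\log f^2_n$ under the normalization $\sum_n\hat q_n f^2_n = 1$, the target is $\mathrm{Ent}_{\hat q}(f^2)\le C\,\mathcal E({\bf f},{\bf f})$ with $C \asymp k^2$. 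Rothaus's lemma reduces this to establishing (a) a spectral gap (Poincaré) inequality $\Var_{\hat q}({\bf f})\le C_P\,\mathcal E({\bf f},{\bf f})$, and (b) a \emph{defective} LSI $\mathrm{Ent}_{\hat q}(f^2)\le A\,\mathcal E({\bf f},{\bf f})+B\sum_n\hat q_n f^2_n$: applying the defect inequality to the mean-zero part of ${\bf f}$ and controlling the residual $L^2$ term by the spectral gap yields the tight LSI with $C = A+(B+2)\,C_P$.

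For ingredient (a) I would diagonalise the path Laplacian explicitly: its eigenvectors are the discrete cosines $v^{(j)}_n = \cos\!\big(\tfrac{\pi j(2n+1)}{2(2k+1)}\big)$ with eigenvalues $2\big(1-\cos\tfrac{\pi j}{2k+1}\big)$ for $0\le j\le 2k$. Since the common factor $\tfrac{1}{2k+1}$ cancels between the two sides, the optimal Poincaré constant is the reciprocal of the smallest nonzero eigenvalue, $\big[2(1-\cos\tfrac{\pi}{2k+1})\big]^{-1}\asymp k^2$, so $C_P\le C'\,k^2$. (A soft telescoping-plus-Cauchy–Schwarz bound $(f_m-f_n)^2\le |m-n|\sum_i (f_{i+1}-f_i)^2$ delivers the same $k^2$ order without computing spectra.)

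For ingredient (b) I would transfer the continuous LSI. Interpolate ${\bf f}$ to the piecewise-linear $\tilde f$ on $[0,2k]$ with $\tilde f(n)=f_n$, and give $[0,2k]$ its uniform probability measure. A one-line computation gives $\int_0^{2k}(\tilde f')^2\,\dd x/(2k) = \tfrac{2k+1}{2k}\,\mathcal E({\bf f},{\bf f})$, and the standard LSI for the uniform measure on an interval of length $2k$ (whose constant scales like the squared length) bounds the continuous entropy of $\tilde f^2$ by $C_0\,k^2\,\mathcal E({\bf f},{\bf f})$. It then remains only to compare the continuous entropy of the interpolant with the discrete entropy $\mathrm{Ent}_{\hat q}(f^2)$.

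This last comparison is the crux and the main obstacle. Because $\psi(u):=u^2\log u^2$ is convex only for $u\ge \expo^{-3/2}$ and concave near the origin, the cell-by-cell trapezoidal bound $\int_n^{n+1}\psi(\tilde f)\,\dd x\le \tfrac12\big(\psi(f_n)+\psi(f_{n+1})\big)$ — which would recover the discrete entropy directly — is valid only on cells where both endpoint values are large. On the remaining cells I would exploit that $\psi$ is bounded on $[0,\expo^{-3/2}]$, so their total contribution is $O(1)$ after integration against the probability measure; this error, together with the mismatch between $\int\tilde f^2\log\tilde f^2$ and the centred entropy, is precisely what the $O(1)$ defect $B$ is meant to absorb. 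Bookkeeping the non-convex cells and verifying that the interpolation inflates the $k$-dependence no worse than $k^2$ is where the real care lies; an alternative that avoids the interpolation is to invoke the Bobkov–Götze/Hardy characterisation of the LSI constant for birth–death chains and evaluate the associated Hardy functional for $\hat{{\bf q}}$, which again returns the order $k^2$.
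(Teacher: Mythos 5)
First, a point of comparison: the paper never proves Lemma \ref{lem:LSI} at all --- it invokes the inequality as classical and cites \cite{diaconis_logarithmic_1996,matthes_structure_2023} --- so your attempt must be judged against the statement itself rather than against a proof in the text. Your ingredient (a) is correct and complete: the path Laplacian has eigenvalues $2\bigl(1-\cos\frac{\pi j}{2k+1}\bigr)$ with the discrete-cosine eigenvectors you write, the factor $\frac{1}{2k+1}$ common to the variance and the Dirichlet form cancels, and the spectral gap gives $C_P \asymp k^2$; the Rothaus composition $C = A+(B+2)\,C_P$ is also the right way to recombine a defective LSI with a Poincar\'e inequality.

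The genuine gap is in ingredient (b), exactly at the step you flag as the crux, and it is not mere unfinished bookkeeping: the one inequality you propose for the discrete-versus-continuous entropy comparison points the wrong way. The transfer requires bounding the \emph{discrete} entropy above by the \emph{continuous} entropy of the interpolant (plus errors controlled by $\mathcal{E}$ or an $O(1)$ defect), i.e.\ a \emph{lower} bound on each cell integral $\int_n^{n+1}\psi(\tilde f)\,\dd x$ in terms of $\tfrac12\bigl(\psi(f_n)+\psi(f_{n+1})\bigr)$. The convexity/trapezoid bound you quote is an \emph{upper} bound on that integral: summed over cells it yields $\mathrm{Ent}_{\mathrm{cont}}\lesssim \mathrm{Ent}_{\hat q}$, which is useless for the chain ``discrete entropy $\le$ continuous entropy $\le C_0\,k^2\,\mathcal{E}$,'' so the convex/concave cell dichotomy attacks the reverse of the needed inequality. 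The step can be repaired without any convexity discussion, via the trapezoid-error identity $\tfrac12\bigl(\psi(f_n)+\psi(f_{n+1})\bigr)-\int_n^{n+1}\psi(\tilde f)\,\dd x=\tfrac{(f_{n+1}-f_n)^2}{2}\int_0^1 t(1-t)\,\psi''\bigl((1-t)f_n+t f_{n+1}\bigr)\dd t$ together with the one-sided bound $\psi''(u)=4\log u+6\le 2\log(2k+1)+6$ (valid since $f_m\le\sqrt{2k+1}$ under the normalization); but even then two effects you do not account for make the defect $B$ grow with $k$ rather than stay $O(1)$: the boundary terms $\frac{\psi(f_0)+\psi(f_{2k})}{2(2k+1)}$, which reach order $\log k$ when $f^2$ concentrates at an endpoint (note that a ``mixed'' cell with one small and one large endpoint is not covered by your boundedness argument), and the renormalization of $\|\tilde f\|_{L^2}$ needed before \eqref{eq:I2} applies. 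Any $B$ growing in $k$, fed through $C=A+(B+2)\,C_P$ with $C_P\asymp k^2$, overshoots the claimed order $C\propto k^2$. So as written the proposal does not close the lemma; your fallback --- the Hardy-type/Bobkov--G\"otze characterization of LSI constants for birth--death chains --- would indeed work, but invoking it is no different in kind from the paper's own citation of \cite{diaconis_logarithmic_1996}.
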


The proof of this classical result can be found in \cite{diaconis_logarithmic_1996,matthes_structure_2023}, and we remark here that the LSI \eqref{eq:I1} is merely a discrete analog of the LSI for the uniform measure on a one-dimensional compact interval, which takes the following form \cite{ghang_sharp_2014}:
\begin{equation}\label{eq:I2}
\int_0^{2k} f^2(x)\,\log f^2(x)\,\mu(x)\,\dd x \leq C\,\int_0^{2k} |f'(x)|^2\,\mu(x)\,\dd x,
\end{equation}
where $C = C(k) \propto k^2$, $\mu(x)$ is the uniform distribution on $[0,2k]$, and $f \colon [0,2k] \to \mathbb{R}_+$ is any smooth function satisfying the constraint that $\int_0^{2k} f^2(x)\,\mu(x)\,\dd x= 1$.

\begin{theorem}[Entropy dissipation]\label{thm:3}
For any $k \in \mathbb{N}_+$, if ${\bf q}(t)$ is a solution of the nonlinear system of ODEs \eqref{eqn:ODE_alpha=1/2} with ${\bf q}(0) \in \mathcal{S}_\mu$ and $\mu \in (0,2k)$, then there exist some $\delta \in (0,1)$ and some finite time $t_* > 0$ for which
\begin{equation}\label{eq:bound_in_relative_entropy}
\DD_{\KL}({\bf q}(t)~||~\hat{{\bf q}})\leq \DD_{\KL}({\bf q}(t_*)~||~\hat{{\bf q}})\,\expo^{-\frac{2\,\delta}{C}\,(t-t_*)},\quad \forall~t \geq t_*,
\end{equation}
where $C= C(k) \propto k^2$.
\end{theorem}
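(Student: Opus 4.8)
The plan is to run the standard entropy--entropy-dissipation scheme: compute $\frac{\dd}{\dd t}\DD_{\KL}({\bf q}(t)\,\|\,\hat{{\bf q}})$, bound its negative from below by a Fisher-information functional, invoke the logarithmic Sobolev inequality of Lemma \ref{lem:LSI} to control that functional from below by the relative entropy itself, and close with Gronwall's lemma. Concretely, writing $g_n \coloneqq q_n/\hat q_n = (2k+1)\,q_n$ and choosing $f_n = \sqrt{g_n}$ in \eqref{eq:I1} (the constraint $\sum_n \hat q_n f_n^2 = \sum_n q_n = 1$ holds), the left-hand side of \eqref{eq:I1} is exactly $\DD_{\KL}({\bf q}\,\|\,\hat{{\bf q}})$, while the right-hand side collapses, since $\hat q_n \equiv \tfrac{1}{2k+1}$, to
\begin{equation*}
\DD_{\KL}({\bf q}\,\|\,\hat{{\bf q}}) \le C\sum_{n=0}^{2k-1}\bigl(\sqrt{q_{n+1}}-\sqrt{q_n}\bigr)^2 .
\end{equation*}
Thus it suffices to show that, after some finite time $t_*$, one has $-\frac{\dd}{\dd t}\DD_{\KL} \ge \frac{2\delta}{C}\,\DD_{\KL}$ for all $t\ge t_*$.

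First I would recast the dynamics \eqref{eqn:ODE_alpha=1/2} as a discrete diffusion. Setting $h_n \coloneqq q_n(1-q_n)$ and adopting the reflecting convention $h_{-1}\coloneqq h_0$, $h_{2k+1}\coloneqq h_{2k}$, the whole system (boundary nodes included) reads $q_n' = \tfrac12(h_{n-1}-h_n) + \tfrac12(h_{n+1}-h_n)$ for $0\le n\le 2k$. Since mass is conserved, $\frac{\dd}{\dd t}\DD_{\KL} = \sum_n q_n'\,\log q_n$, and a discrete summation by parts (the reflecting boundary kills the boundary terms) gives
\begin{equation*}
\frac{\dd}{\dd t}\DD_{\KL}({\bf q}\,\|\,\hat{{\bf q}}) = -\frac12\sum_{n=0}^{2k-1}\bigl(h_{n+1}-h_n\bigr)\bigl(\log q_{n+1}-\log q_n\bigr).
\end{equation*}
Factoring $h_{n+1}-h_n = (q_{n+1}-q_n)(1-q_{n+1}-q_n)$ exposes the crucial sign structure: each summand is the always-nonnegative quantity $(q_{n+1}-q_n)(\log q_{n+1}-\log q_n)$ times the factor $(1-q_{n+1}-q_n)$. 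I would then invoke the elementary bound $(a-b)(\log a-\log b)\ge 4(\sqrt a-\sqrt b)^2$ for $a,b>0$, which follows from the logarithmic-mean/arithmetic-mean inequality $\frac{a-b}{\log a-\log b}\le\frac{a+b}{2}$. Consequently, provided $1-q_{n+1}-q_n\ge \delta$ for every $n$ and all $t\ge t_*$,
\begin{equation*}
-\frac{\dd}{\dd t}\DD_{\KL} \ge \frac{\delta}{2}\sum_{n=0}^{2k-1}(q_{n+1}-q_n)(\log q_{n+1}-\log q_n) \ge 2\delta\sum_{n=0}^{2k-1}(\sqrt{q_{n+1}}-\sqrt{q_n})^2 \ge \frac{2\delta}{C}\,\DD_{\KL},
\end{equation*}
and Gronwall's lemma yields exactly \eqref{eq:bound_in_relative_entropy}.

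The main obstacle is the one ingredient I have deferred: producing $t_*>0$ and $\delta\in(0,1)$ with $q_{n+1}(t)+q_n(t)\le 1-\delta$ for all $n$ and $t\ge t_*$. I would reduce this to an eventual uniform lower bound $q_n(t)\ge\eps$ valid for all $n$ and $t\ge t_*$, because $\sum_m q_m = 1$ over $2k+1\ge 3$ nodes then forces $q_{n+1}+q_n\le 1-(2k-1)\eps$, so one may take $\delta=(2k-1)\eps$ (small enough to lie in $(0,1)$). Such a bound I would establish in two stages: (i) \emph{full support in finite time} --- whenever a node is empty but has a neighbour carrying mass in $(0,1)$, its derivative $\tfrac12(h_{n-1}+h_{n+1})$ is strictly positive, so the support spreads to all of $\{0,\ldots,2k\}$ after a finite time $t_1$; and (ii) a \emph{uniform-in-time positivity} beyond $t_1$.

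The delicate point is stage (ii), and it is the crux of the whole argument. Because $x\mapsto x(1-x)$ is not monotone, a naive discrete maximum principle applied to $\min_n q_n$ does not close, and the entropy identity above only guarantees that $\DD_{\KL}$ is nonincreasing on the region $\{q_{n+1}+q_n\le 1\ \forall n\}$, not everywhere. I would therefore aim to show that the ``bad'' region, where two adjacent masses sum to more than $1$, is only transiently visited --- so that the trajectory eventually enters and remains in $\{q_{n+1}+q_n\le 1\ \forall n\}$ --- and then deploy a LaSalle-type argument, using the compactness of the simplex together with the fact that $\hat{{\bf q}}$ is the unique equilibrium there, to conclude the trajectory converges to $\hat{{\bf q}}$ and hence satisfies $q_n\ge\eps$ uniformly past some $t_*$. (The genuinely degenerate data, namely the point masses $\delta_j$, are themselves equilibria for which no decay can hold, and must be excluded from the hypotheses.) Proving that the two-adjacent-heavy region cannot be revisited indefinitely is precisely where the difficulty concentrates, and is the part I expect to require the most care.
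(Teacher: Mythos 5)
Your quantitative machinery coincides exactly with the paper's own proof: the same dissipation identity
\[
\frac{\dd}{\dd t}\DD_{\KL}({\bf q}~||~\hat{{\bf q}}) \;=\; -\frac12\sum_{n=0}^{2k-1}(1-q_n-q_{n+1})\,(q_{n+1}-q_n)\,\bigl(\log q_{n+1}-\log q_n\bigr)
\]
(your reflecting-boundary summation by parts is just a tidier way of organizing that computation), the same elementary inequality $(a-b)(\log a-\log b)\ge 4(\sqrt a-\sqrt b)^2$, the same application of Lemma \ref{lem:LSI} with $f_n=\sqrt{q_n/\hat q_n}$, and the same Gronwall step with the same constant $2\delta/C$. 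So, up to producing $t_*$ and $\delta$, you have reproduced the paper's argument.

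Where you locate the ``crux,'' however, rests on a misconception. The region where some adjacent pair satisfies $q_n+q_{n+1}>1$ can never be visited: the dynamics preserves nonnegativity and total mass (on a face $\{q_n=0\}$ of the simplex one has $q_n'=\tfrac12 q_{n-1}(1-q_{n-1})+\tfrac12 q_{n+1}(1-q_{n+1})\ge 0$, so the simplex is forward-invariant), hence ${\bf q}(t)$ remains a probability vector for all $t$, and then trivially $q_n+q_{n+1}\le\sum_m q_m=1$ for every $n$. The entropy is therefore nonincreasing along the \emph{entire} trajectory, your worry about ``transient visits to the bad region'' is a non-problem, and your LaSalle argument can be run directly on the compact simplex. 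What genuinely remains is the strict eventual bound $q_n(t)+q_{n+1}(t)\le 1-\delta$ for $t\ge t_*$, i.e. \eqref{eq:ingredients}; the paper extracts this from qualitative pointwise convergence ${\bf q}(t)\to\hat{{\bf q}}$, which it does not prove in full either but imports from \cite{cao_derivation_2021}, so your stage (i)/(ii) sketch is an attempt to prove more than the paper does, and it simplifies considerably once the false obstacle is removed. Finally, your remark about degenerate data is a genuine catch: for $0<j<2k$ the point mass $\delta_j$ lies in $\mathcal{S}_\mu$ with $\mu=j\in(0,2k)$, is an equilibrium of \eqref{eqn:ODE_alpha=1/2}, and has constant relative entropy $\log(2k+1)$, so the theorem as stated does need a nondegeneracy hypothesis (or an argument that such data are exceptional) --- a point the paper's appeal to qualitative convergence glosses over.
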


\begin{proof}
We notice that the relative entropy is dissipating along the solution of \eqref{eqn:ODE_alpha=1/2} since
\begin{equation*}
\begin{aligned}
&\frac{\dd}{\dd t} \DD_{\KL}({\bf q}~||~\hat{{\bf q}}) = \sum\limits_{n=0}^{2k} q'_n\,\log q_n = q'_0\,\log q_0 + q'_{2k}\,\log q_{2k} + \sum\limits_{n=1}^{2k-1} q'_n\,\log q_n \\
&= \frac 12\left[(1-q_1)\,q_1 - (1-q_0)\,q_0\right]\,\log q_0 + \frac 12\left[(1-q_{2k-1})\,q_{2k-1} - (1-q_{2k})\,q_{2k}\right]\,\log q_{2k} \\
&\quad + \frac 12\,\left[\sum\limits_{n=1}^{2k-1} \left(q_{n+1}\,(1-q_{n+1})-q_n\,(1-q_n)\right)\,\log q_n - \sum\limits_{n=1}^{2k-1} \left(q_n\,(1-q_n)-q_{n-1}\,(1-q_{n-1})\right)\,\log q_n \right] \\
&= \frac 12\,\sum\limits_{n=0}^{2k-1} \left(q_{n+1}\,(1-q_{n+1})-q_n\,(1-q_n)\right)\,\log q_n - \frac 12\,\sum\limits_{n=1}^{2k} \left(q_n\,(1-q_n)-q_{n-1}\,(1-q_{n-1})\right)\,\log q_n \\
&= \frac 12\,\sum\limits_{n=0}^{2k-1} \left(q_{n+1}\,(1-q_{n+1})-q_n\,(1-q_n)\right)\,\log \frac{q_n}{q_{n+1}} \\
&= \frac 12\,\sum\limits_{n=0}^{2k-1} (1-q_n-q_{n+1})\,(q_{n+1}-q_n)\,\left(\log q_{n+1} - \log q_n\right) \leq 0.
\end{aligned}
\end{equation*}
Consequence, the relative entropy is a Lyapunov functional for the finite-dimensional system of nonlinear ODEs \eqref{eqn:ODE_alpha=1/2}, and we obtain the pointwise convergence guarantee ${\bf q}(t) \xrightarrow{t\to \infty} \hat{{\bf q}}$ (using a similar argument as in \cite{cao_derivation_2021}). In particular, the aforementioned qualitative convergence ensures the existence of some $\delta \in (0,1)$ and some finite $t_* > 0$ such that
\begin{equation*}
\max\limits_{0\leq n\leq 2k-1} \left(q_n(t) + q_{n+1}(t)\right) \leq 1-\delta \quad \forall~ t\geq t_*
\end{equation*}
or equivalently that
\begin{equation}\label{eq:ingredients}
\min\limits_{0\leq n\leq 2k-1} \left(1 - q_n(t) - q_{n+1}(t)\right) \geq \delta \quad \forall~ t\geq t_*.
\end{equation}
Now we observe for all $a,b\in \mathbb{R}_+$ that
\[(a-b)\,(\log a - \log b) = \int_b^a 1\,\dd t\cdot \int_b^a \frac{1}{t}\,\dd t \geq \left(\int_b^a \frac{1}{\sqrt{t}}\,\dd t\right)^2 = 4(\sqrt{a}-\sqrt{b})^2.\]
Therefore we can invoke Lemma \ref{lem:LSI} with $f_n = \sqrt{\frac{q_n}{\hat{q}_n}}$ to deduce that
\begin{equation}\label{eq:ready}
\frac{\dd}{\dd t} \DD_{\KL}({\bf q}~||~\hat{{\bf q}}) \leq -2\,\delta\,\sum\limits_{n=0}^{2k-1} \hat{q}_n\,\left(\sqrt{\frac{q_{n+1}}{\hat{q}_{n+1}}}-\sqrt{\frac{q_n}{\hat{q}_n}}\right)^2 \leq -\frac{2\,\delta}{C}\,\DD_{\KL}({\bf q}~||~\hat{{\bf q}})
\end{equation}
for all $t\geq t_*$, where $C = C(k) \propto k^2$. Thanks to Gronwall's inequality, we reach the advertised bound \eqref{eq:bound_in_relative_entropy}.
\end{proof}

To illustrate the quantitative convergence result proved in Theorem \ref{thm:3} (with $k=2$), we use ${\bf q}(t=0) = (0.25, 0.2, 0.35, 0.2, 0)$ as the initial datum, with the same set-up as used in the generation of figures \ref{fig:alpha=1,k=2} and \ref{fig:alpha=0,k=2}. We plot in figure \ref{fig:alpha=.5,k=2,ex1}-left and figure \ref{fig:alpha=.5,k=2,ex1}-right the evolution of $\DD_{\KL}({\bf q}(t)~||~\hat{{\bf q}})$ in the normal scale and the semi-logy scale, starting from ${\bf q}(t=0) = (0.25, 0.2, 0.35, 0.2, 0)$.

\begin{figure}[!htb]
  \begin{subfigure}{0.45\textwidth}
    \centering
    \includegraphics[scale=0.4]{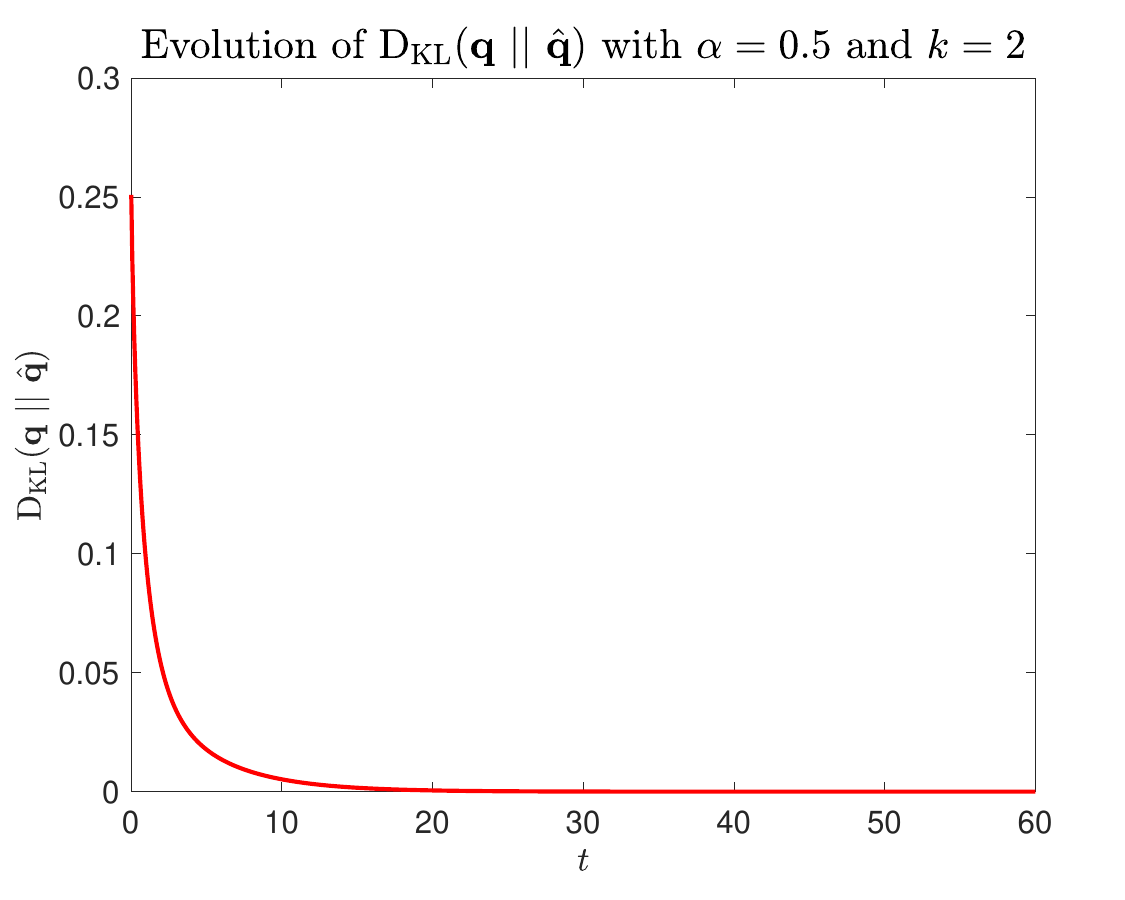}
  \end{subfigure}
  \hspace{0.1in}
  \begin{subfigure}{0.45\textwidth}
    \centering
    \includegraphics[scale=0.4]{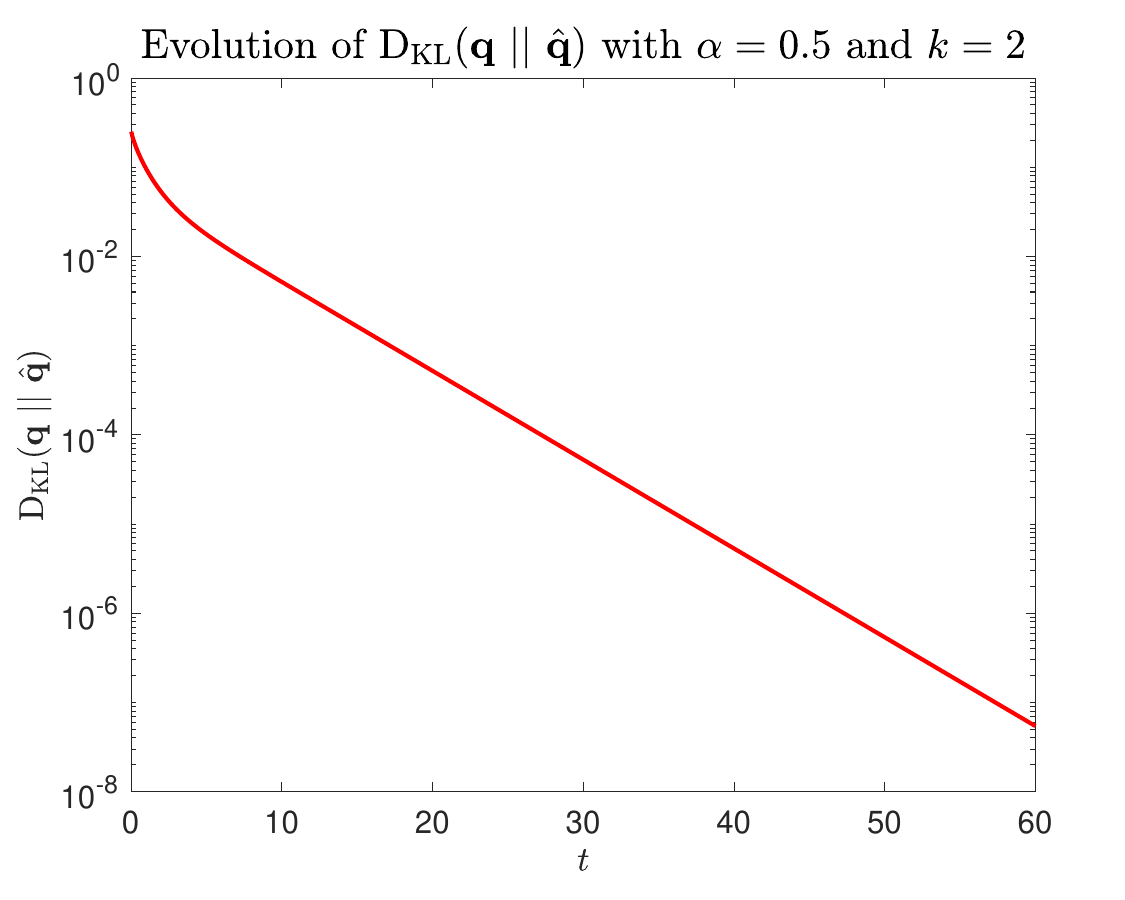}
  \end{subfigure}
   \caption{{\bf Left}: Decay of $\DD_{\KL}({\bf q}(t)~||~\hat{{\bf q}})$ along the solution of \eqref{eqn:ODE_alpha=1/2} with $k=2$. {\bf Right}: Decay of $\DD_{\KL}({\bf q}(t)~||~\hat{{\bf q}})$ in the semi-logy scale. The decay is exponentially fast with respect to time, as predicted by Theorem \ref{thm:3}.}
  \label{fig:alpha=.5,k=2,ex1}

\end{figure}

Next, we show that an estimate similar to \eqref{eq:bound_in_relative_entropy} for the relative entropy can also be established for another ``pseudo-metric'' known as the chi-squared distance, defined via
\begin{equation*}
\chi^2({\bf f},{\bf g}) = \sum\limits_{n=0}^{2k} \frac{(f_n-g_n)^2}{g_n}
\end{equation*}
whenever ${\bf f},{\bf g} \in \mathcal{P}(\{0,1,\ldots,2k\})$ such that $g_n > 0$ for all $0\leq n\leq 2k$.

\begin{theorem}\label{thm:4}
Under the settings and notations of Theorem \ref{thm:3}, for some $\tilde{C} = \tilde{C}(k)$ we have
\begin{equation}\label{eq:bound_in_relative_entropy}
\chi^2({\bf q}(t),\hat{{\bf q}}) \leq \chi^2({\bf q}(t_*),\hat{{\bf q}})\,\expo^{-\frac{\delta}{\tilde{C}}\,(t-t_*)},\quad \forall~t \geq t_*.
\end{equation}
\end{theorem}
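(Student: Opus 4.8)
The plan is to mirror the entropy-dissipation argument of Theorem \ref{thm:3}, replacing the logarithmic Sobolev inequality of Lemma \ref{lem:LSI} by its $L^2$ (spectral-gap) counterpart, namely the discrete Poincar\'e inequality for the uniform measure. First I would exploit that $\hat{q}_n \equiv \frac{1}{2k+1}$ is constant to rewrite
\[
\chi^2({\bf q},\hat{{\bf q}}) = (2k+1)\sum_{n=0}^{2k}(q_n-\hat{q}_n)^2 = (2k+1)\sum_{n=0}^{2k} q_n^2 - 1,
\]
so that computing its dissipation reduces to differentiating $\sum_n q_n^2$. Introducing the abbreviation $\psi_n \coloneqq q_n\,(1-q_n)$, the system \eqref{eqn:ODE_alpha=1/2} takes the conservative (discrete-flux) form $q_n' = J_n - J_{n-1}$ with $J_n \coloneqq \frac12\,(\psi_{n+1}-\psi_n)$ for $0\le n\le 2k-1$ and the no-flux convention $J_{-1} = J_{2k} = 0$.

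A summation by parts (Abel summation), entirely analogous to the one carried out in the proof of Theorem \ref{thm:3}, then yields the clean dissipation identity
\begin{equation*}
\frac{\dd}{\dd t}\chi^2({\bf q},\hat{{\bf q}}) = -(2k+1)\sum_{n=0}^{2k-1}(1-q_n-q_{n+1})\,(q_{n+1}-q_n)^2 \le 0,
\end{equation*}
where I have used the telescoping identity $\psi_{n+1}-\psi_n = (q_{n+1}-q_n)\,(1-q_n-q_{n+1})$ together with $q_n + q_{n+1} \le 1$. Next I would recycle verbatim the finite-time estimate \eqref{eq:ingredients} furnished by the qualitative convergence ${\bf q}(t) \to \hat{{\bf q}}$ established in Theorem \ref{thm:3}: there are $\delta \in (0,1)$ and $t_* > 0$ with $1 - q_n(t) - q_{n+1}(t) \ge \delta$ for all $0 \le n \le 2k-1$ and all $t\ge t_*$. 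Replacing the spatially varying weight by this lower bound gives, for $t \ge t_*$,
\begin{equation*}
\frac{\dd}{\dd t}\chi^2({\bf q},\hat{{\bf q}}) \le -\delta\,(2k+1)\sum_{n=0}^{2k-1}(q_{n+1}-q_n)^2.
\end{equation*}

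The remaining, and genuinely new, ingredient is a Poincar\'e (spectral-gap) inequality for the uniform distribution on the path $\{0,1,\ldots,2k\}$: writing $u_n \coloneqq q_n - \hat{q}_n$ (so that $\sum_n u_n = 0$ since both ${\bf q}$ and $\hat{{\bf q}}$ are probability vectors), one has $\sum_{n=0}^{2k-1}(u_{n+1}-u_n)^2 \ge \lambda_1 \sum_{n=0}^{2k} u_n^2$, where $\lambda_1 = 4\sin^2\!\big(\tfrac{\pi}{2(2k+1)}\big) \propto 1/k^2$ is the smallest nonzero eigenvalue of the discrete Neumann Laplacian on $2k+1$ sites. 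This is precisely the $L^2$-linearization of the logarithmic Sobolev inequality \eqref{eq:I1} and could alternatively be deduced from it. Since $q_{n+1}-q_n = u_{n+1}-u_n$ and $(2k+1)\sum_n u_n^2 = \chi^2({\bf q},\hat{{\bf q}})$, the Dirichlet form is bounded below by $\lambda_1\,\chi^2({\bf q},\hat{{\bf q}})/(2k+1)$, whence
\begin{equation*}
\frac{\dd}{\dd t}\chi^2({\bf q},\hat{{\bf q}}) \le -\delta\,\lambda_1\,\chi^2({\bf q},\hat{{\bf q}}), \qquad t\ge t_*.
\end{equation*}
Setting $\tilde{C} = \tilde{C}(k) \coloneqq 1/\lambda_1 \propto k^2$ and invoking Gronwall's inequality delivers the advertised estimate \eqref{eq:bound_in_relative_entropy}.

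The main obstacle is not any single hard inequality but rather the bookkeeping needed to bring the dissipation into the clean Dirichlet-form shape $\sum(q_{n+1}-q_n)^2$: one must perform the summation by parts carefully, correctly handle the boundary terms at $n=0$ and $n=2k$ through the no-flux convention, and verify that the telescoping of $\psi_{n+1}-\psi_n$ produces exactly the nonnegative weight $1-q_n-q_{n+1}$. Once this identity is in hand the passage from entropy to chi-squared is conceptually routine: the LSI is simply swapped for the weaker and easier spectral-gap estimate, and the same finite-time positivity window \eqref{eq:ingredients} powers the Gronwall step.
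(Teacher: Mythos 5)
Your proposal is correct and follows essentially the same route as the paper's proof: the identical dissipation identity $\frac{\dd}{\dd t}\chi^2({\bf q},\hat{{\bf q}}) = -(2k+1)\sum_{n=0}^{2k-1}(1-q_n-q_{n+1})(q_{n+1}-q_n)^2$, the same reuse of the finite-time lower bound \eqref{eq:ingredients} from Theorem \ref{thm:3}, and the same Poincar\'e-plus-Gronwall conclusion. The only differences are cosmetic: you derive the identity via a conservative flux form and Abel summation where the paper computes directly, and you make the Poincar\'e constant explicit as the path-Laplacian spectral gap $4\sin^2\bigl(\tfrac{\pi}{2(2k+1)}\bigr)$ where the paper simply cites a reference for a constant $\tilde{C}(k)$.
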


\begin{proof}
We observe that the chi-squared distance $\chi^2({\bf q},\hat{{\bf q}})$ also serves as a Lyapunov functional for the ODE system \eqref{eqn:ODE_alpha=1/2} due to the following computations:
\begin{align*}
&\frac{\dd}{\dd t} \chi^2({\bf q},\hat{{\bf q}}) = 2\,(2k+1)\,\sum\limits_{n=0}^{2k} (q_n-\hat{q}_n)\,q'_n \\
&= 2\,(2k+1)\,\left[q_0\,q'_0 + q_{2k}\,q'_{2k} + \sum\limits_{n=1}^{2k-1} q_n\,q'_n\right] \\
&= -(2k+1)\,\sum\limits_{n=0}^{2k-1} (q_{n+1}-q_n)\,\left[q_{n+1}\,(1-q_{n+1})-q_n\,(1-q_n)\right] \\
&= -(2k+1)\,\sum\limits_{n=0}^{2k-1} (1-q_n-q_{n+1})\,(q_{n+1}-q_n)^2 \\
&= -\sum\limits_{n=0}^{2k-1} (1-q_n-q_{n+1})\,\hat{q}_n\,\left(\frac{q_{n+1}}{\hat{q}_{n+1}}-\frac{q_n}{\hat{q}_n}\right)^2 \leq 0.
\end{align*}
Thanks to the previous estimate \eqref{eq:ingredients} and the Poincar{\'e} inequality satisfied by the uniform distribution $\hat{\bf {q}}$ \cite{bobkov_discrete_1999}, we deduce that
\begin{equation}\label{eq:ready_to_go}
\frac{\dd}{\dd t} \chi^2({\bf q},\hat{{\bf q}}) \leq -\delta\,\sum\limits_{n=0}^{2k-1}\hat{q}_n\,\left(\frac{q_{n+1}}{\hat{q}_{n+1}}-\frac{q_n}{\hat{q}_n}\right)^2 \leq -\frac{\delta}{\tilde{C}}\,\chi^2({\bf q},\hat{{\bf q}}) \quad \forall~t \geq t_*
\end{equation}
for some $\tilde{C} = \tilde{C}(k) > 0$. Finally, a routine application of Grownall's lemma leads us to the claimed bound.
\end{proof}

To demonstrate the quantitative bound established in Theorem \ref{thm:4} (with $k=2$), we use the same set-up as used in the generation of figures \ref{fig:alpha=1,k=2}, \ref{fig:alpha=0,k=2} and \ref{fig:alpha=.5,k=2,ex1}. We plot in figure \ref{fig:alpha=.5,k=2,ex2}-left and figure \ref{fig:alpha=.5,k=2,ex2}-right the evolution of $\chi^2({\bf q}(t),\hat{{\bf q}})$ in the normal scale and the semi-logy scale, starting from ${\bf q}(t=0) = (0.25,0.2,0.35,0.2,0)$.

\begin{figure}[!htb]
  \begin{subfigure}{0.45\textwidth}
    \centering
    \includegraphics[scale=0.4]{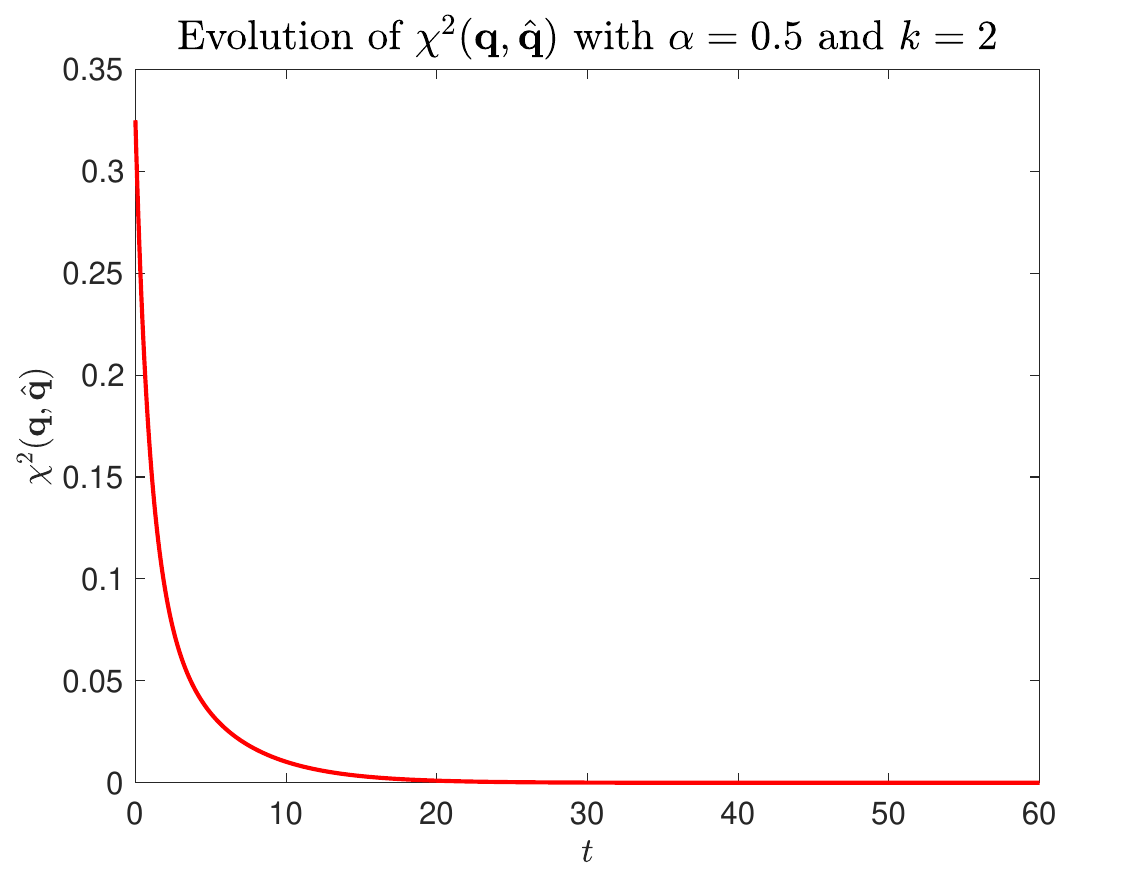}
  \end{subfigure}
  \hspace{0.1in}
  \begin{subfigure}{0.45\textwidth}
    \centering
    \includegraphics[scale=0.4]{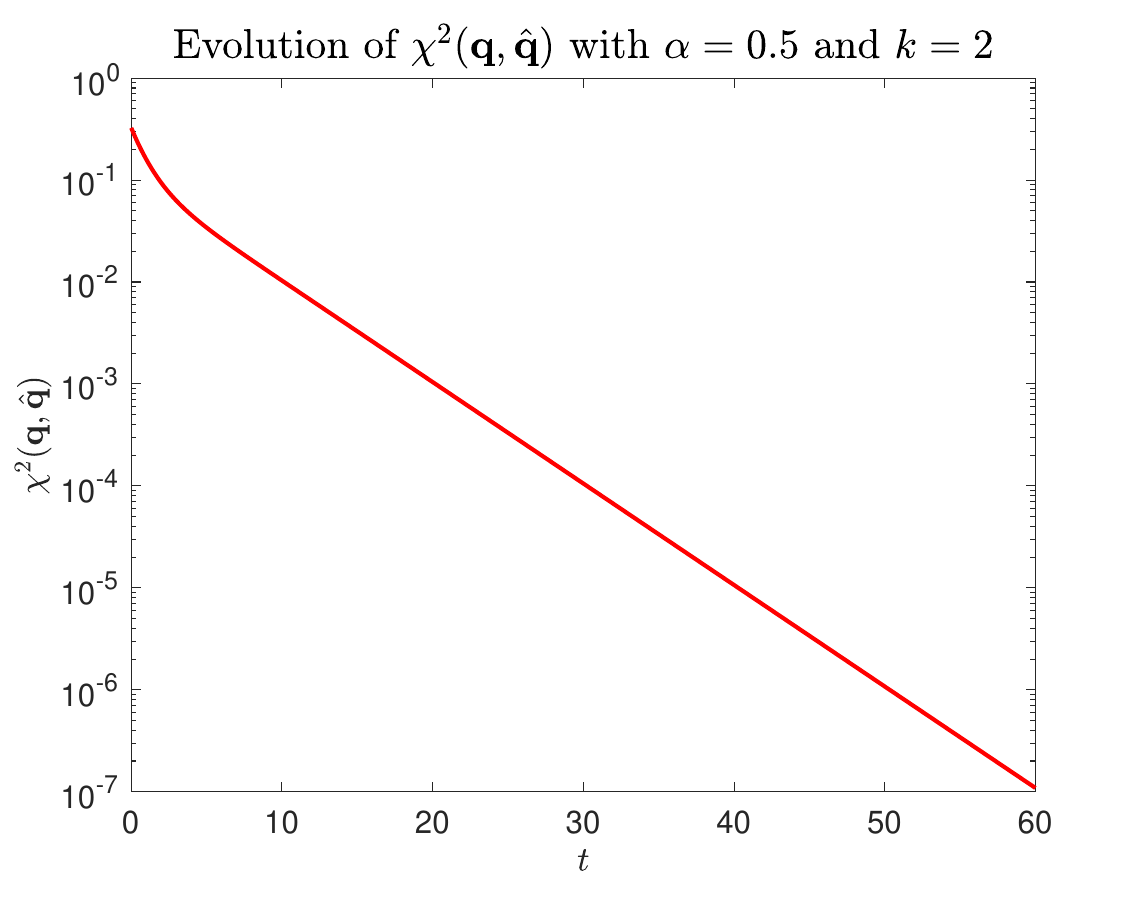}
  \end{subfigure}
   \caption{{\bf Left}: Decay of $\chi^2({\bf q}(t),\hat{{\bf q}})$ along the solution of \eqref{eqn:ODE_alpha=1/2} with $k=2$. {\bf Right}: Decay of $\chi^2({\bf q}(t),\hat{{\bf q}})$ in the semi-logy scale. The decay is exponentially fast with respect to time, as shown by Theorem \ref{thm:4}.}
  \label{fig:alpha=.5,k=2,ex2}
\end{figure}


\section{Conclusion}
\setcounter{equation}{0}

In this work, we proposed and analyzed the Iterative Persuasion-Polarization (IPP) opinion model in the mean field region as the number of agents tends to infinity. Our model contributes to the growling list of opinion dynamics among the sociophysics literature and contains a parameter $\alpha \in [0,1]$ measuring the tendency that each agent will align his/her opinion with another agent's opinion during an interaction process. We provided analytical and quantitative results regarding the large time behavior of the mean-field IPP ODE system \eqref{eqn:ODE_main} under three particular choices of the parameter $\alpha$. In particular, we proved that the steady state opinion distribution is a two-point distribution supported near the average initial opinion when $\alpha = 1$, indicating the formation of a ``almost consensus'' opinion profile. On the other hand, we showed when $\alpha = 0$ that the opinion distribution converges to a polarized state in which only two extreme opinions survive in the long run. Lastly, in the case where $\alpha = 1/2$, we established the convergence to a uniform distribution for solutions of the mean-field system of ODEs \eqref{eqn:ODE_main} under the large time limit. The present paper also leaves many important unsolved problems suitable for further research activities in the future. First, is it possible analyze the large time behaviour of the nonlinear ODE system \eqref{eqn:ODE_main} when $\alpha \in [0,1] \setminus \{0,1/2,1\}$~? If so, can we determine the equilibrium distribution of opinions~? Numerical solutions of the ODE system in this case suggest that the system will converge to a unique equilibrium regardless of initial datum that depends only on $\alpha$ and $k$ (as it does with $\alpha = 1/2$) as illustrated in figure \ref{fig:opinionDist}.

\begin{figure}[!htb]
\centering
\includegraphics[width=0.8\textwidth]{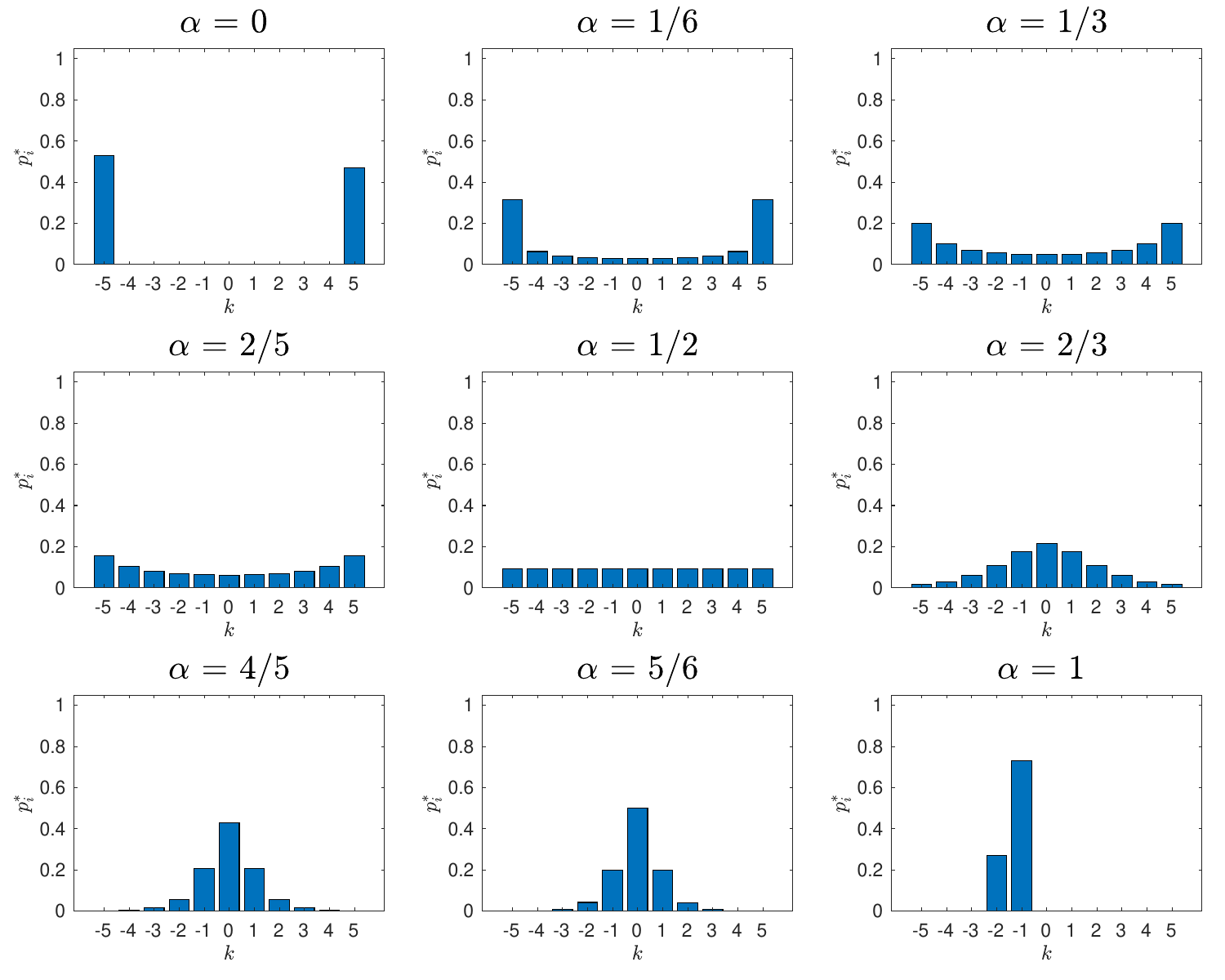}
\caption{Distribution of opinions at equilibrium (with $k=5$) for varying values of $\alpha \in [0,1]$, starting from the initial datum ${\bf q}(t=0)$ given by ${\bf q}(t=0)=(0.25,0.15,0.05,0.05,0.15,0,0.10,0.05,0.05,0.05,0.10)$.}
\label{fig:opinionDist}
\end{figure}

Second, in the case of $\alpha = 0$, how can we link the equilibrium polarized opinion profile with the initial opinion distribution so that a more explicit form of the equilibrium distribution can be identified~? A proper theoretical treatment of these questions allows us have a better understanding about the roles played by the persuasion parameter $\alpha$ and (possibly) the initial datum on the shape of the steady state distribution of opinions.


\end{document}